\documentclass[onecolumn]{IEEEtran}
%

\usepackage{amsmath,amsthm,amssymb}
\usepackage{lineno}
\usepackage{geometry}
\usepackage{color}
\usepackage{listings}
\usepackage{algorithm}
\usepackage{algorithmic}
\usepackage{setspace}
\usepackage{multirow}
\usepackage{diagbox} 
\usepackage{tikz}
\usetikzlibrary{trees,arrows}
\usepackage{makecell}
\usepackage{float}
\usepackage{ulem}
\usepackage{textgreek}

\geometry{left=2cm,right=2cm,top=2.0cm,bottom=3.0cm}

\newtheorem{theorem}{Theorem}[section]
\newtheorem{lemma}{Lemma}[section]
\newtheorem{corollary}{Corollary}[section]
\newtheorem{proposition}{Proposition}[section]
\newtheorem{define}{Definition}[section]
\newtheorem{remark}{Remark}[section]

\newtheorem{example}{Example}[section]


\def\R{{\mathbb{R}}}

\lstdefinelanguage{Maple}{
   keywords={if, while, do, else, end, for, from, to,then},
   keywordstyle=\color{blue}\bfseries,
   ndkeywords={class, export, boolean, throw, implements, import, this},
   ndkeywordstyle=\color{darkgray}\bfseries,
   identifierstyle=\color{black},
   sensitive=false,
   comment=[l]{//},
   morecomment=[s]{/*}{*/},
   commentstyle=\color{purple}\ttfamily,
   stringstyle=\color{red}\ttfamily,
   morestring=[b]',
   morestring=[b]"
}

\lstset{
   language=Maple,
   backgroundcolor=\color{lightgray},
   extendedchars=true,
   basicstyle=\footnotesize\ttfamily,
   showstringspaces=false,
   showspaces=false,
   numbers=left,
   numberstyle=\tiny,
   numbersep=9pt,
   tabsize=2,
   breaklines=true,
   showtabs=false,
   captionpos=b
}

\lstdefinelanguage{SOStools}{
   keywords={syms,sosprogram,monomials,sosineq,sossetobj,sossolve,sosgetsol,sospolyvar},
   keywordstyle=\color{blue}\bfseries,
   ndkeywords={syms,sosprogram,monomials,sosineq,sossetobj,sossolve,sosgetsol},
   ndkeywordstyle=\color{blue}\bfseries,
   identifierstyle=\color{black},
   sensitive=false,
   comment=[l]{//},
   morecomment=[s]{/*}{*/},
   commentstyle=\color{purple}\ttfamily,
   stringstyle=\color{red}\ttfamily,
   morestring=[b]',
   morestring=[b]"
}

\lstset{
   language=SOStools,
   backgroundcolor=\color{white},
   extendedchars=true,
   basicstyle=\footnotesize\ttfamily,
   showstringspaces=false,
   showspaces=false,
   numbers=none,
   numberstyle=\tiny,
   numbersep=9pt,
   tabsize=2,
   breaklines=true,
   showtabs=false,
   captionpos=b
}


%

%

%
\ifCLASSINFOpdf
\else
\fi
\hyphenation{}


\begin{document}
%

\title{Proving Information Inequalities and Identities with Symbolic Computation}

%
%
%

\author{Laigang~Guo,~
        Raymond~W.~Yeung,~
        and~Xiao-Shan~Gao
\thanks{L. Guo is with the Laboratory of Mathematics and Complex Systems (Ministry of Education), School of Mathematical Sciences, Beijing Normal University, Beijing, China. e-mail: (lgguo@bnu.edu.cn).}
\thanks{R. W. Yeung is with the Institute of Network Coding and the Department of Information Engineering, The Chinese
University of Hong Kong, N.T., Hong Kong. e-mail: (whyeung@ie.cuhk.edu.hk).}
\thanks{X.-S. Gao is with the Key Laboratory of Mathematics Mechanization, Institute of Systems Science, AMSS, Chinese Academy of Sciences, and University of Chinese Academy of Sciences, Beijing, China. e-mail: (xgao@mmrc.iss.ac.cn).}
}

\maketitle


\begin{abstract}
\noindent

Proving linear inequalities and identities of Shannon's information measures, possibly with linear constraints on the information measures,
is an important problem in information theory. For this purpose, ITIP and other variant algorithms have been developed and implemented, which are all based on solving a linear program (LP). 
In particular, an identity $f = 0$ is verified by solving two LPs, one for $f \ge 0$ and one for $f \le 0$.
In this paper, we develop a set of algorithms that can be implemented by symbolic computation. Based on these algorithms, procedures for verifying linear information
inequalities and identities are devised. 
Compared with LP-based algorithms, our procedures can produce analytical proofs that 
are both human-verifiable and free of numerical errors. Our procedures are also more efficient computationally. For constrained inequalities, by taking advantage of the algebraic structure of the problem, the size of the LP that needs to be solved can be significantly
reduced. For identities, instead of solving two LPs, the identity can be verified directly with very little computation.

\end{abstract}

\begin{IEEEkeywords}
Entropy, mutual information, information inequality, information identity, machine proving, ITIP.
\end{IEEEkeywords}

\IEEEpeerreviewmaketitle

\section{Introduction}

In information theory, we may need to prove various information inequalities and identities that involve Shannon's information measures. For example, such information inequalities and identities play a crucial role in establishing the converse of most coding theorems. However, proving an information inequality or identity  
involving more than a few random variables can be highly non-trivial.

To tackle this problem, a framework for linear information inequalities
was introduced in \cite{Yeung1997}. Based on this framework, the problem of 
verifying Shannon-type inequalities can be formulated as a linear program (LP), and a software package based on MATLAB called Information Theoretic Inequality Prover (ITIP) was developed \cite{Yeung.Yan1996}.
Subsequently, different variations of ITIP have been developed. Instead of MATLAB,
Xitip \cite{Pulikkoonattu2006} uses a C-based linear programming solver, and it has  been further developed into its web-based version, oXitip \cite{Rathenakar2020}. 
minitip \cite{Csirmaz2016} is a C-based version of ITIP that adopts a simplified syntax
and has a user-friendly syntax checker.
psitip \cite{Li2020} is a Python library that can verify unconstrained/constrained/existential entropy inequalities. It is a computer algebra system where random variables, expressions, and regions are objects that can be manipulated. 
AITIP \cite{Ho2020} is a cloud-based platform that not only provides analytical proofs for Shannon-type inequalities but also give hints on constructing a smallest counterexample in case the inequality to be verified is not a Shannon-type inequality.

Using the above LP-based approach, to prove an information identity $f = 0$, two LPs need to be solved, one for the inequality $f \ge 0$ and the other for the inequality $f \le 0$. Roughly speaking, the amount of computation for proving an information identity is twice the amount for proving an information inequality. If the underlying random variables exhibit certain Markov or functional dependence structures, there exist more efficient approaches to proving information identities \cite{Yeung2019}\cite{Chan2019}.

The LP-based approach is in general not computationally efficient because it does not take advantage of the special structure of the underlying LP. In this paper, we take a different approach. Instead of transforming the problem into a general LP to be solved numerically, we develop algorithms that can implemented by symbolic computation,
and based on these algorithms, procedures for proving information inequalities and identities are devised. Our specific contributions are:
\begin{enumerate}
\item 
Analytical proofs for information inequalities and identities that are free of numerical errors can be produced.
\item 
Compared with the LP-based approach, the computational efficiency of our procedure is in general much higher.
\item 
Information identities can be proved directly with very little computation instead of having to solve 2 LPs.
\end{enumerate}

The rest of the paper is organized as follows. 
In Section~II, we present the preliminaries for information inequalities. 
In Section~III, we develop algorithms for simplifying a set of linear inequalities
subject to linear inequality and equality constraints.
In Section~IV, we introduce a set of variables (inspired by the theory of $I$-Measure \cite{Yeung1991})
that facilitates the implementation of our algorithms. 
In Section~V, the procedures for proving information inequalities and identities are presented. Two examples are given in Section~VI to illustrate our procedures. Section~VII concludes the paper.



\section{Information inequality preliminaries}
In this section, we present some basic results related to information inequalities and their verification. For a comprehensive discussion on the topic, we refer the reader to 
\cite[Chs.~13-15]{Yeung2008}.

It is well known that all Shannon's information measures, namely entropy, conditional entropy, mutual information, and conditional mutual information are always nonnegative. The nonnegativity of all Shannon's information measures forms a set of
inequalities called the {\it basic inequalities}. The set of basic inequalities, however, is not minimal in the sense that some basic inequalities are implied by the others. For example,
$$H(X|Y)\geq0\ {\rm and}\ I(X;Y)\geq0,$$
which are both basic equalities involving random variables $X$ and $Y$, imply
$$H(X)=H(X|Y)+I(X;Y)\geq0,$$
again a basic equality involving $X$ and $Y$.
In order to eliminate such redundancies, the minimal subset of the basic inequalities
was found in \cite{Yeung1997}. 

Throughout this paper, all random variables are discrete. Unless otherwise specified, all 
information expressions involve some or all of the random variables $X_1,X_2, \ldots,X_n$. The value of $n$ will be specified when necessary.  Denote the set $\{1,2,\ldots,n\}$ by $\mathcal{N}_n$ and the sequence $[1,2,\ldots,n]$ by $[n]$.

\begin{theorem}{\rm \cite{Yeung1997}}
\label{element-ine}
Any Shannon's information measure can be expressed as a conic combination of the following two elemental forms of
Shannon's information measures:

i) $H(X_i|X_{\mathcal{N}_n-\{i\}})$

ii) $I(X_i;X_j|X_K)$, where $i\neq j$ and $K\subseteq \mathcal{N}_n-\{i,j\}$.

\end{theorem}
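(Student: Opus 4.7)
The plan is to show that every Shannon measure reduces, via repeated application of the chain rule, to a non-negative integer combination of the two elemental forms (i) and (ii). Since entropy, conditional entropy, and mutual information are special cases of conditional mutual information (taking empty sets where appropriate), it suffices to handle a generic conditional mutual information $I(X_A;X_B|X_C)$ with $A,B,C$ disjoint subsets of $\mathcal{N}_n$, and a generic conditional entropy $H(X_G|X_{G'})$ with $G,G'$ disjoint. The reduction will produce coefficients that are all equal to $1$, which certainly qualifies as a conic combination.

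First I would handle the mutual information case. Fix arbitrary orderings on $A$ and $B$, and apply the chain rule twice:
\begin{equation*}
I(X_A;X_B|X_C) \;=\; \sum_{i\in A}\sum_{j\in B} I\bigl(X_i;X_j\,\bigm|\, X_{C\cup A_{<i}\cup B_{<j}}\bigr),
\end{equation*}
where $A_{<i}$ and $B_{<j}$ are the elements preceding $i$ and $j$ in the chosen orderings. Each summand is of the form $I(X_i;X_j|X_K)$ with $i\neq j$ and $K\subseteq \mathcal{N}_n-\{i,j\}$, hence of elemental form (ii).

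Next I would handle conditional entropies. By the chain rule,
\begin{equation*}
H(X_G|X_{G'}) \;=\; \sum_{i\in G} H\bigl(X_i\,\bigm|\, X_{G'\cup G_{<i}}\bigr),
\end{equation*}
so it suffices to reduce a single-variable conditional entropy $H(X_i|X_S)$, where $S\subseteq \mathcal{N}_n-\{i\}$, to elemental form. I would use the identity
\begin{equation*}
H(X_i|X_S) \;=\; H(X_i|X_{\mathcal{N}_n-\{i\}}) \;+\; I\bigl(X_i;X_{\mathcal{N}_n-\{i\}-S}\,\bigm|\, X_S\bigr),
\end{equation*}
which isolates one term of elemental form (i), and then apply the previous step to the remaining conditional mutual information to obtain a sum of elemental forms of type (ii).

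The verification that this procedure terminates with all coefficients non-negative is essentially bookkeeping, since each chain-rule expansion introduces only $+1$ coefficients; that is why I do not expect any real obstacle. The one point that deserves care is that the reduction of a generic Shannon measure must only involve the random variables already appearing in it, plus possibly others among $X_1,\dots,X_n$ that enter through the ``$\mathcal{N}_n-\{i\}$'' in elemental form (i); I would make this dependence on the ambient index set $n$ explicit, so that the theorem statement is unambiguous. Once this is done, a combination of the two expansions above shows that every Shannon measure on $X_1,\ldots,X_n$ is a conic (in fact, $\{0,1\}$-coefficient) combination of the elemental forms (i) and (ii), completing the proof.
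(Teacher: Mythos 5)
Your core argument is correct, and it is essentially the standard proof of this result: note that the paper itself states the theorem without proof, citing \cite{Yeung1997}, and the argument in that reference is the same two-step chain-rule decomposition you use. Both of your key identities check out: the double chain-rule expansion produces terms $I(X_i;X_j|X_{C\cup A_{<i}\cup B_{<j}})$, each of elemental form (ii) precisely because $A,B,C$ are pairwise disjoint, and the identity $H(X_i|X_S)=H(X_i|X_{\mathcal{N}_n-\{i\}})+I(X_i;X_{\mathcal{N}_n-\{i\}-S}|X_S)$ isolates the single type-(i) term, after which the leftover conditional mutual information is again of the disjoint kind. All coefficients produced are $0$ or $1$, hence the combination is conic.

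There is, however, one gap in your reduction to the disjoint case. The justification you give (``taking empty sets where appropriate'') covers measures with missing arguments, such as $H(X_G)=H(X_G|X_{\emptyset})$ and $I(X_A;X_B)=I(X_A;X_B|X_{\emptyset})$, but Shannon's information measures in general allow the index sets to overlap; indeed the paper writes $I(X_G;X_{G'}|X_{G''})$ for arbitrary $G,G',G''\subseteq\mathcal{N}_n$ in Section~II. A measure such as $I(X_1,X_2;X_2,X_3|X_4)$ is reached by neither of your two cases, so as written the proof does not cover all Shannon's information measures. The repair is one more identity in the same spirit: setting $D=(A\cap B)\setminus C$ for arbitrary $A,B,C$,
\begin{equation*}
I(X_A;X_B|X_C)=H(X_D|X_C)+I\bigl(X_{A\setminus(C\cup D)};X_{B\setminus(C\cup D)}\bigm|X_{C\cup D}\bigr),
\end{equation*}
which follows from the chain rule together with the facts that $H(X_D|X_B,X_C)=0$ (since $D\subseteq B$) and that variables appearing in the conditioning set may be deleted from the other two arguments. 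The first term on the right is a conditional entropy with disjoint index sets and the second is a conditional mutual information whose three index sets are pairwise disjoint, so both are handled by your two cases; adding this observation completes the proof.
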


The nonnegativity of the two elemental forms of Shannon’s information measures forms a proper subset of the set of basic inequalities. The inequalities in this smaller set are called the {\it elemental inequalities}.
In \cite{Yeung1997}, the minimality of the elemental inequalities is also proved.
%
The total number of elemental inequalities is equal to 
\begin{equation*}
m=n+\sum\limits_{r=0}^{n-2}\left(
  \begin{array}{c}
    n\\
    r\\
  \end{array}
\right)
\left(
  \begin{array}{c}
    n-r\\
    2\\
  \end{array}
\right)
=n+
\left(
  \begin{array}{c}
    n\\
    2\\
  \end{array}
\right)  2^{n-2}  .
\label{m}
\end{equation*}

In this paper, inequalities (identities) involving only Shannon's information measures
are referred to as information inequalities (identities). 
The elemental inequalities are called {\it unconstrained} information inequalities
because they hold for all joint distributions of the random variables.
In information theory, we very often deal with information inequalities (identities) that hold  under certain constraints on the joint distribution of the random variables. These are called {\it constrained} information inequalities (identities), and the associated constraints are usually  expressible as linear constraints on the Shannon's information measures. We will 
confine our discussion on constrained inequalities of this type.

\begin{example}
\def\ra{\rightarrow}
The celebrated data processing theorem asserts that for any four random variables 
$X$, $Y$, $Z$ and $T$, if $X \rightarrow Y \rightarrow Z \rightarrow T$ 
forms a Markov chain, then
$I(X;T) \ge I(Y;Z)$. Here, $I(X;T) \ge I(Y;Z)$ is a constrained information inequality under the constraint $X \rightarrow Y \rightarrow Z \rightarrow T$, which is equivalent to 
\[
\left\{
\begin{array}{rcl}
I(X;Z|Y) & = & 0 \\
I(X,Y;T|Z) & = & 0 ,
\end{array}
\right.
\]
or 
\[
I(X;Z|Y) + I(X,Y;T|Z) = 0
\]
owing to the nonnegativity of conditional mutual information. Either way, the Markov
chain can be expressed a set of linear constraint(s) on the Shannon's information measures.	
\end{example}

Information ineaualities (unconstrained or constrained) that are implied by the 
basic inequalities are called {\it Shannon-type} inequalities. Most of the information 
inequalities that are known belong this type. However, {\it non-Shannon-type} 
inequalities do exist, e.g., \cite{Zhang1998}. See \cite[Ch.~15]{Yeung2008} for a discussion.
	
Shannon's information measures, with conditional mutual informations 
being the general form, can be expressed as a linear combination of joint entropies by means of following identity:
$$ I(X_G;X_{G'}|X_{G''})=H(X_G,X_{G''})+H(X_{G',G''})-H(X_G,X_{G'},X_{G''})-H(X_{G'}). $$
where $G,G',G''\subseteq \mathcal{N}_n$. 
For the random variables $X_1, X_2, \ldots, X_n$, there are a total of $2^n-1$ joint entropies.
By regarding the joint entropies as variables, the basic (elemental) inequalities
become linear inequality constraints in $\R^{2^n-1}$. By the same token, 
the linear equality constrains on 
Shannon's information measures imposed by the problem under discussion become
linear equality constraints in $\R^{2^n-1}$. This way, the problem of verifying 
a (linear) Shannon-type inequality can be formulated as a linear program (LP),
which is described next.

Let $\bf{h}$ be the column $m$-vector of the joint entropies 
of $X_1, X_2, \ldots, X_n$. The set of elemental inequalities can be written as 
$G {\bf h} \ge 0$, where $G$ is an $m \times (2^n-1)$ matrix and $G {\bf h} \ge 0$ means
all the components of $G {\bf h}$ are nonnegative. Likewise, the constraints on 
the joint entropies can be written as $Q {\bf h} = 0$. When there is no constraint on the joint entropies, $Q$ is assumed to have zero row. The following theorem enables
a Shannon-type inequality to be verified by solving an LP.

\begin{theorem} {\rm \cite{Yeung1997}}
${\bf b}^\top {\bf h} \ge 0$ is a Shannon-type inequality under the constraint $Q {\bf h} = 0$ 
if and only if the minimum of the problem
\begin{center}
Minimize ${\bf b}^\top {\bf h}$, subject to $G {\bf h} \ge 0$ and $Q {\bf h} = 0$
\end{center}
is zero. 
\label{LP-S}
\end{theorem}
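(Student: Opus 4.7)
The plan is to observe that this theorem is essentially a restatement of the definition of a Shannon-type inequality in geometric/optimization language, and the feasibility set has enough structure to make the equivalence immediate once we spell out what ``implied by the basic inequalities'' means.

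First I would make precise the definition used: the inequality $\mathbf{b}^\top \mathbf{h} \ge 0$ is Shannon-type under the constraint $Q\mathbf{h}=0$ iff it holds for every vector $\mathbf{h} \in \R^{2^n-1}$ lying in the feasible region
\[
\Gamma := \{\mathbf{h} \in \R^{2^n-1} : G\mathbf{h} \ge 0,\ Q\mathbf{h}=0\}.
\]
This is the content of expressing the basic inequalities and the constraints as the linear system $G\mathbf{h}\ge 0,\ Q\mathbf{h}=0$ in the entropy coordinates: any joint entropy vector satisfies these, and conversely the inequality is said to be Shannon-type exactly when it is a consequence of these linear inequalities and equalities (equivalently, valid on $\Gamma$).

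Next I would verify the two directions. For the forward direction, assume $\mathbf{b}^\top \mathbf{h}\ge 0$ is Shannon-type under $Q\mathbf{h}=0$. Then by the definition above, $\mathbf{b}^\top\mathbf{h}\ge 0$ for every feasible $\mathbf{h}$, so the LP minimum is $\ge 0$. But $\mathbf{h}=\mathbf{0}$ is trivially feasible (it satisfies $G\mathbf{0}\ge 0$ and $Q\mathbf{0}=0$) and attains the value $\mathbf{b}^\top\mathbf{0}=0$, so the minimum is $\le 0$. Hence the minimum equals $0$. For the reverse direction, suppose the LP minimum is $0$; then for every feasible $\mathbf{h}$ one has $\mathbf{b}^\top\mathbf{h}\ge 0$, which by the definition above says exactly that $\mathbf{b}^\top\mathbf{h}\ge 0$ is Shannon-type under the constraint $Q\mathbf{h}=0$.

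The argument is short, and no step is really an obstacle in the sense of a hard computation; the only delicate point is conceptual, namely pinning down that ``implied by the basic inequalities (and the linear constraints)'' is interpreted as ``valid on the polyhedral cone $\Gamma$.'' One might also want to note that $\Gamma$ is a cone containing the origin, which is what guarantees that the infimum, if nonnegative, is actually attained at $0$, so that the minimum is well defined and equals $0$ rather than being merely bounded below. Once these points are made explicit, the equivalence follows with essentially no further work.
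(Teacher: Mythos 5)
Your proposal is correct, but note that the paper itself offers no proof of this statement: it is quoted verbatim from the cited reference (Yeung 1997), so there is no in-paper argument to compare against. Your reconstruction is essentially the standard one from that source: once ``Shannon-type under the constraint $Q\mathbf{h}=0$'' is formalized as ``$\mathbf{b}^\top\mathbf{h}\ge 0$ holds on the polyhedral cone $\Gamma=\{\mathbf{h}: G\mathbf{h}\ge 0,\ Q\mathbf{h}=0\}$,'' the equivalence is immediate from the facts that $\mathbf{0}\in\Gamma$ attains objective value $0$ and that the minimum being $0$ means no feasible point has negative objective value. You correctly identify the only delicate point as definitional rather than computational, and your remark about the cone structure (the infimum is either $0$ or $-\infty$, since a feasible point with negative value could be scaled arbitrarily) is exactly what makes the word ``minimum'' in the theorem statement legitimate. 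The one thing to make explicit, if you wanted a fully self-contained writeup, is that the set of basic inequalities and the set of elemental inequalities encoded by $G$ are mutually implying, so ``implied by the basic inequalities'' and ``valid on $\Gamma$'' do coincide; the paper's Theorem~2.1 (the elemental forms) is what licenses this substitution.
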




\section{Linear inequalities and related algorithms}
\label{LI-algo.}

In this section, we will develop some algorithms for simplifying a linear inequality set constrained by a linear equality set. 
These algorithms will be used as building blocks for the procedures 
to be developed in Section \ref{procedures} for proving information inequalities and identities.

We will start by discussing some notions pertaining to linear inequality sets and linear
equality sets. Then we will establish some related properties that are instrumental for 
developing the aforementioned algorithms. 

Let $\mathbf{x}=[x_1,x_2,\ldots,x_n]$, and let $\mathbb{R}_h[\mathbf{x}]$ be the set of all homogeneous linear polynomials in $\mathbf{x}$ with real coefficients.
In this paper, unless otherwise specified, we assume that all inequality sets have the form $S_f=\{f_i\geq0,i\in\mathcal{N}_m\}$, with $f_i\neq0$ and $f_i\in\mathbb{R}_h[\mathbf{x}]$, and all the equality sets have the form $E_{\tilde{f}}=\{\tilde{f}_i=0,i\in\mathcal{N}_{\widetilde{m}}\}$ with $\tilde{f}_{i}\neq0$ and $\tilde{f}_i\in\mathbb{R}_h[\mathbf{x}]$.

For a given set of polynomials $P_f=\{f_i,i\in\mathcal{N}_m\}$ and the corresponding set of inequalities $S_f=\{f_i\geq0,i\in\mathcal{N}_m\}$, and a given set of polynomials $P_{\tilde{f}}=\{\tilde{f}_i,i\in\mathcal{N}_{\widetilde{m}}\}$ and the corresponding set of equalities $E_{\tilde{f}}=\{\tilde{f}_i=0,i\in\mathcal{N}_{\widetilde{m}}\}$, where $f_i$ and $\tilde{f}_i$ are polynomials in $\mathbf{x}$, we write
$S_f=\mathcal{R}(P_f)$, $P_f=\mathcal{R}^{-1}(S_f)$,
$E_{\tilde{f}}=\widetilde{\mathcal{R}}(P_{\tilde{f}})$ and $P_{\tilde{f}}=\widetilde{\mathcal{R}}^{-1}(E_{\tilde{f}})$.
%

\begin{define}\label{subset}
Let $S_f=\{f_i\geq0,i\in\mathcal{N}_m\}$ and $S_{f'}=\{f'_i\geq0,i\in\mathcal{N}_{m'}\}$ be two inequality sets, and $E_{\tilde{f}}$ and $E_{\tilde{f}'}$ be two equality sets.  We write $S_{f'}\subseteq S_{f}$ if $\mathcal{R}^{-1}(S_{f'})\subseteq \mathcal{R}^{-1}(S_{f})$, and $E_{\tilde{f}'}\subseteq E_{\tilde{f}}$ if $\widetilde{\mathcal{R}}^{-1}(E_{\tilde{f}'})\subseteq \widetilde{\mathcal{R}}^{-1}(E_{\tilde{f}})$. 
Furthermore, we write $(f_i\geq0) \in S_f$ to mean that the inequality $f_i\geq0$ is included in $S_f$.
\end{define}

\begin{define}
Let $N_{>0}=\{1,2,\ldots\}$. For $a_i\in N_{>0}, i\in\mathcal{N}_n$, a sequence $[a_1,a_2,...,a_{n}]$ is said to be in {\it descending order} if $a_1\geq a_2\geq\cdots\geq a_{n}$.
\end{define}


\begin{define}
Let $\mathbb{R}_{>0}$ and $\mathbb{R}_{\geq0}$ be the sets of positive and nonnegative real numbers, respectively.
A linear polynomial $F$ in $\mathbf{x}$ is called a {\it positive (nonnegative) linear combination} of polynomials $f_j$ in $\mathbf{x}$, $j=1,\ldots,k$, if $F=\sum_{j=1}^{k}r_jf_j$ with $r_j\in$ $\mathbb{R}_{>0}$ $(r_j\in$ $\mathbb{R}_{\geq0})$.
%
A nonnegative linear combination is also called a {\it conic} combination.
\end{define}


\begin{define}
The inequalities $f_1\geq0,f_2\geq0,\ldots,f_k\geq0$ imply the inequality $f\geq0$ if the following holds:
\begin{center}
$\mathbf{x}$ satisfying $f_1\geq0,f_2\geq0,\ldots,f_k\geq0$ implies $\mathbf{x}$ satisfies $f\geq0$.
\end{center}

\end{define}

\begin{define}
Given a set of inequalities $S_f=\{f_i\geq0,i\in\mathcal{N}_m\}$, for some $i\in\mathcal{N}_m$, $f_i\geq0$ is called a redundant inequality if $f_i\geq0$ is implied by the inequalities $f_j\geq0$, where $j\in \mathcal{N}_m$ and $j\neq i$.

\end{define}





\begin{define}\label{Ineq-set}
Two inequalities $f\geq0$ and $g\geq0$ are trivially equivalent if $f=c\,g$ for some $c\in \mathbb{R}_{>0}$.
Given two sets of inequalities $S_f=\{f_i\geq0,i\in\mathcal{N}_{m_1}\}$ and $S_g=\{g_i\geq0,i\in\mathcal{N}_{m_2}\}$, we say that
$S_f$ and $S_g$ are trivially equivalent if 
\begin{enumerate}
\item $S_f$ and $S_g$ have exactly the same number of inequalities;
\item for every $i\in\mathcal{N}_{m_1}$, $f_i\geq0$ is trivially equivalent to $g_j\geq0$ for some $j\in\mathcal{N}_{m_2}$;
\item for every $i\in\mathcal{N}_{m_2}$, $g_i\geq0$ is trivially equivalent to $f_j\geq0$ for some $j\in\mathcal{N}_{m_1}$.
\end{enumerate}
Furthermore, if $S_f$ and $S_g$ are trivially equivalent, then we regard $S_f$ and $S_g$ as the same set of inequalities.

\end{define}

\begin{lemma}[Farkas' Lemma\cite{Farkas1902,Achiya1997}]
	\label{Farkas}
	Let $\mathbf{A}\in \mathbb{R}^{m\times n}$ and $\mathbf{b}\in\mathbb{R}^{m}$. Then exactly one the following two assertions is true:
	
	1. There exists an ${\rm \mathbf{x}}\in\mathbb{R}^n$ such that $\mathbf{A}{\rm \mathbf{x}}=\mathbf{b}$ and ${\rm \mathbf{x}}\ge0$.
	
	2. There exists a ${\rm \mathbf{y}}\in\mathbb{R}^m$ such that $\mathbf{A}^{T}{\rm \mathbf{y}}\geq0$ and $\mathbf{b}^{T}\mathbf{y}<0$.
	
\end{lemma}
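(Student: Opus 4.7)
The plan is to show the dichotomy in two parts: first that the two assertions cannot simultaneously hold, and second that at least one of them always does. For the first part, suppose both hold. Then $\mathbf{b}^{T}\mathbf{y}=(\mathbf{A}\mathbf{x})^{T}\mathbf{y}=\mathbf{x}^{T}(\mathbf{A}^{T}\mathbf{y})$ is a nonnegative linear combination of nonnegative numbers and must therefore be $\geq 0$, contradicting $\mathbf{b}^{T}\mathbf{y}<0$. This takes only a few lines.

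The substantive direction is to assume assertion 1 fails and produce the $\mathbf{y}$ required by assertion 2. I would take the geometric/separating-hyperplane approach. Let $C=\{\mathbf{A}\mathbf{x}:\mathbf{x}\in\mathbb{R}^{n},\mathbf{x}\geq 0\}$ be the conic hull of the columns of $\mathbf{A}$. The assumption that assertion 1 fails is exactly $\mathbf{b}\notin C$. Since $C$ is a nonempty closed convex set (see below) and $\{\mathbf{b}\}$ is compact and disjoint from $C$, the separating hyperplane theorem furnishes $\mathbf{y}\in\mathbb{R}^{m}$ and $\alpha\in\mathbb{R}$ with $\mathbf{y}^{T}\mathbf{z}\geq\alpha$ for every $\mathbf{z}\in C$ and $\mathbf{y}^{T}\mathbf{b}<\alpha$. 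Because $\mathbf{0}\in C$, we get $\alpha\leq 0$, so $\mathbf{y}^{T}\mathbf{b}<0$. For any column $\mathbf{a}_{j}$ of $\mathbf{A}$ and any $t>0$ we have $t\mathbf{a}_{j}\in C$, so $t\,\mathbf{y}^{T}\mathbf{a}_{j}\geq\alpha$; letting $t\to\infty$ forces $\mathbf{y}^{T}\mathbf{a}_{j}\geq 0$ for every $j$, i.e.\ $\mathbf{A}^{T}\mathbf{y}\geq 0$. This is the desired $\mathbf{y}$.

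The main obstacle is the hidden step that $C$, the finitely generated cone $\{\mathbf{A}\mathbf{x}:\mathbf{x}\geq 0\}$, is closed in $\mathbb{R}^{m}$. This is not automatic for the image of a closed set under a linear map and is where the real content of the argument sits. I would handle it by the following reduction: show that every element of $C$ can be written as $\mathbf{A}\mathbf{x}$ with $\mathbf{x}\geq 0$ whose positive coordinates index a set of linearly independent columns of $\mathbf{A}$. This is a standard Carath\'eodory-type argument: starting from any representation, as long as the active columns are linearly dependent one can perturb the coefficients along a dependence relation and shift mass until one coordinate drops to zero, strictly reducing the support. Consequently $C$ is a finite union of sets of the form $\{\mathbf{A}_{J}\mathbf{x}_{J}:\mathbf{x}_{J}\geq 0\}$ for $J$ indexing linearly independent subsets of columns. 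Each such set is the image of the closed orthant under an injective (hence proper, onto its range) linear map, so it is closed; a finite union of closed sets is closed, giving closedness of $C$.

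Putting the pieces together yields the lemma. Once closedness of $C$ is in hand, the separation argument is essentially immediate, and the impossibility of both alternatives holding has already been dispatched in one line. I would present the proof in the order: (i) mutual exclusivity; (ii) closedness of $C$ via the support-reduction argument; (iii) separation and extraction of $\mathbf{y}$.
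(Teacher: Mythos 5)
The paper does not prove this lemma at all: Farkas' Lemma is imported as a classical result, cited to Farkas (1902) and to Dax's ``An elementary proof of Farkas' lemma'' (SIAM Review, 1997), and is then used as a black box to prove Lemma~\ref{GG}. So there is no in-paper argument to compare yours against. Your proof is the standard convex-geometry proof and it is correct: the one-line verification that the two alternatives exclude each other, the strict separation of the point $\mathbf{b}$ from the cone $C=\{\mathbf{A}\mathbf{x}:\mathbf{x}\ge0\}$, the observation that $\mathbf{0}\in C$ forces the separation level $\alpha\le0$, and the scaling argument $t\,\mathbf{y}^{T}\mathbf{a}_{j}\ge\alpha$ for all $t>0$ giving $\mathbf{A}^{T}\mathbf{y}\ge0$. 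You also correctly identify the only genuinely nontrivial point --- that a finitely generated cone is closed, which is \emph{not} automatic for linear images of closed sets --- and your Carath\'eodory-type support-reduction argument, followed by writing $C$ as a finite union of images of orthants under injective (hence proper onto their closed ranges) linear maps, handles it properly. For contrast, the reference the paper actually cites (Dax) proves the lemma by an elementary active-set/least-squares argument that avoids the separating hyperplane theorem altogether; your route needs the separation theorem plus the closedness lemma, but is the more geometrically transparent one and is the standard textbook treatment. Either would legitimately fill the gap the paper leaves to the literature.
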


\begin{lemma}\label{GG}
	Given $h_1,\ldots,h_k,{h}\in\mathbb{R}_h[\mathbf{y}]$,  $h_1\ge0,...,h_k\ge0$ imply ${h}\ge0$ if and only if ${h}$ is a conic combination of $h_1,\ldots,h_k$.
	
\end{lemma}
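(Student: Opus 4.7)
The plan is to prove the two directions separately, with the forward direction being straightforward and the converse reduced to a direct application of Farkas' Lemma.

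For the sufficiency (``if'') direction, I would observe that if $h=\sum_{j=1}^k r_j h_j$ with $r_j\in\mathbb{R}_{\ge 0}$, then for any $\mathbf{y}$ satisfying $h_1(\mathbf{y})\ge 0,\ldots,h_k(\mathbf{y})\ge 0$ we have $h(\mathbf{y})=\sum_j r_j h_j(\mathbf{y})\ge 0$, since a nonnegative combination of nonnegative numbers is nonnegative. This is immediate and requires no further machinery.

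For the necessity (``only if'') direction, since each $h_i$ and $h$ lies in $\mathbb{R}_h[\mathbf{y}]$, I would write $h_i(\mathbf{y})=\mathbf{a}_i^{\top}\mathbf{y}$ with $\mathbf{a}_i\in\mathbb{R}^n$ and $h(\mathbf{y})=\mathbf{b}^{\top}\mathbf{y}$ with $\mathbf{b}\in\mathbb{R}^n$. Let $\mathbf{A}\in\mathbb{R}^{n\times k}$ be the matrix whose columns are $\mathbf{a}_1,\ldots,\mathbf{a}_k$. Then ``$h$ is a conic combination of $h_1,\ldots,h_k$'' is exactly the statement that there exists $\mathbf{r}\in\mathbb{R}^k$ with $\mathbf{r}\ge 0$ and $\mathbf{A}\mathbf{r}=\mathbf{b}$. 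I would argue by contradiction: assume no such $\mathbf{r}$ exists. Then by Farkas' Lemma (Lemma~\ref{Farkas}) applied to $\mathbf{A}$ and $\mathbf{b}$, there exists $\mathbf{y}\in\mathbb{R}^n$ with $\mathbf{A}^{\top}\mathbf{y}\ge 0$ and $\mathbf{b}^{\top}\mathbf{y}<0$. Unpacking, $\mathbf{A}^{\top}\mathbf{y}\ge 0$ says $h_i(\mathbf{y})=\mathbf{a}_i^{\top}\mathbf{y}\ge 0$ for every $i=1,\ldots,k$, while $\mathbf{b}^{\top}\mathbf{y}<0$ says $h(\mathbf{y})<0$. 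This contradicts the hypothesis that $h_1\ge 0,\ldots,h_k\ge 0$ imply $h\ge 0$, completing the proof.

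There is no real obstacle here; the only subtlety is to recognize the correct translation between the linear-polynomial language used in the statement and the matrix/vector language required by Farkas' Lemma, and to note that homogeneity of the $h_i$ and $h$ is what allows us to identify each polynomial with a single coefficient vector so that evaluation becomes $\mathbf{a}_i^{\top}\mathbf{y}$. Once this identification is made, the result falls out of Farkas' Lemma in one step.
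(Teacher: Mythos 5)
Your proof is correct and takes essentially the same approach as the paper's: both directions are handled identically, with the converse obtained by identifying each homogeneous linear polynomial with its coefficient vector and invoking Farkas' Lemma (Lemma~\ref{Farkas}). The only cosmetic difference is that you phrase the key step as a proof by contradiction (no conic representation implies a separating $\mathbf{y}$ exists), whereas the paper argues directly that the hypothesis falsifies Assertion~2 and hence Assertion~1 holds; since Farkas' Lemma is an exclusive dichotomy, these are the same argument.
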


\begin{proof}
	It is straightforward that $h_1\ge0,\ldots,h_k\ge0$ imply ${h}\ge0$ if ${h}$ is a conic combination of $h_1,\ldots,h_k$.
	We need only to prove the converse.

	Assume that $h_1\ge0,\ldots,h_k\ge0$.
	Define a vector $\mathbf{h}=(h_1,\ldots,h_k)^T$, and the variable vector $\mathbf{y}=(y_1,\ldots,y_m)^T$.
	Since $h_1,\ldots,h_k,{h}\in\mathbb{R}_h[\mathbf{y}]$, we can let
	$\mathbf{h} = \mathbf{A}^{T}\mathbf{y}$ and ${h} = \mathbf{b}^{T}\mathbf{y}$, where $\mathbf{A}\in \mathbb{R}^{m\times n}$ and $\mathbf{b}\in\mathbb{R}^{m}$.
	Since $h_1\ge0,...,h_k\ge0$ imply ${h}\ge0$, there exists no ${\rm \mathbf{y}}\in\mathbb{R}^m$ such that $\mathbf{A}^{T}{\rm \mathbf{y}}\geq0$ and $\mathbf{b}^{T}\mathbf{y}<0$, which means Assertion 2 in Lemma \ref{Farkas} is false.
	Then by the lemma, Assertion 1 must be true, that is, there exists an ${\rm \mathbf{x}}\in\mathbb{R}^n$ such that $\mathbf{A}{\rm \mathbf{x}}=\mathbf{b}$ and ${\rm \mathbf{x}}\ge0$. Then we have
	$$
	\mathbf{A}{\rm \mathbf{x}}=\mathbf{b} \Rightarrow (\mathbf{A}{\rm \mathbf{x}})^T=\mathbf{b}^T
	\Rightarrow {\rm \mathbf{x}}^T\mathbf{A}^T=\mathbf{b}^T
	\Rightarrow {\rm \mathbf{x}}^T\mathbf{A}^T\mathbf{y}=\mathbf{b}^T\mathbf{y}
	\Rightarrow {\rm \mathbf{x}}^T\mathbf{h}={h},
	$$
	which implies that ${h}$ is a conic combination of $h_1,\ldots,h_k$.
	The lemma is proved.
	
\end{proof}

Note that this lemma generalizes Theorem~2 in \cite{Yeung1997}. 

\begin{define}\label{def-purein}
Let $S_f=\{f_i(\mathbf{x})\geq0,i\in\mathcal{N}_{m}\}$ be an inequality set.
	If $f_{k}(\mathbf{x})=0$ for all solution $\mathbf{x}$ of $S_f$,
	then $f_{k}(\mathbf{x})=0$ is called an {\it implied equality} of $S_f$. The inequality set $S_f$ is called a {\it pure inequality set} if $S_f$ has no implied equalities.
\end{define}

	\begin{lemma}\label{equa1}
		Let $S_f=\{f_i(\mathbf{x})\geq0,i\in\mathcal{N}_{m}\}$ be an inequality set.
		Then $f_k$ is an implied inequality of $S_f$ if and only if 
		\begin{equation}
			f_{k}(\mathbf{x})\equiv\sum\limits_{i=1,i\neq k}^{m}p_{i}f_{i}(\mathbf{x}),
		\label{qqoinva}
		\end{equation}
	where $p_i \le 0$ for all $i \in {\cal N}_m \backslash \{k\}$.
	\end{lemma}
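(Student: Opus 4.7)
The plan is to prove the two directions separately, with the forward direction being immediate from the sign pattern of the coefficients and the reverse direction reducing to Lemma \ref{GG} applied to $-f_k$.

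For the sufficiency direction, I would assume that $f_k(\mathbf{x}) \equiv \sum_{i \ne k} p_i f_i(\mathbf{x})$ with $p_i \le 0$. Then for any $\mathbf{x}$ satisfying $S_f$, each $f_i(\mathbf{x}) \ge 0$, so every term $p_i f_i(\mathbf{x})$ is nonpositive, giving $f_k(\mathbf{x}) \le 0$. Combined with $f_k(\mathbf{x}) \ge 0$ (also from $S_f$), we conclude $f_k(\mathbf{x}) = 0$ on the solution set, i.e., $f_k = 0$ is an implied equality.

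For the necessity direction, I would argue that if $f_k(\mathbf{x}) = 0$ for every solution of $S_f$, then in particular $-f_k(\mathbf{x}) \ge 0$ whenever $f_1(\mathbf{x}) \ge 0, \ldots, f_m(\mathbf{x}) \ge 0$. Since $-f_k \in \mathbb{R}_h[\mathbf{x}]$, Lemma \ref{GG} applies and yields a conic representation
\begin{equation*}
-f_k(\mathbf{x}) \equiv \sum_{i=1}^{m} q_i f_i(\mathbf{x}), \qquad q_i \ge 0.
\end{equation*}
Isolating the $f_k$ term on the left gives $(1+q_k) f_k(\mathbf{x}) \equiv -\sum_{i \ne k} q_i f_i(\mathbf{x})$, and dividing by $1+q_k > 0$ produces the desired expression with $p_i = -q_i/(1+q_k) \le 0$ for $i \ne k$.

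The only subtlety I expect is the possible appearance of the $q_k f_k$ term on the right-hand side of the conic combination supplied by Lemma \ref{GG}; a careless application would leave a self-referential expression rather than the required combination over $i \ne k$. The resolution is precisely the rearrangement above, which is always legitimate because $1 + q_k \ge 1 > 0$, so no division-by-zero issue arises and the nonpositivity of the resulting coefficients is preserved.
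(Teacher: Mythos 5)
Your proposal is correct and follows essentially the same route as the paper's own proof: sufficiency by the sign argument combined with $f_k(\mathbf{x}) \ge 0$, and necessity by applying Lemma \ref{GG} to $-f_k$ and then absorbing the self-referential $q_k f_k$ term via division by $1+q_k > 0$, yielding $p_i = -q_i/(1+q_k) \le 0$. The subtlety you flag is handled in the paper by exactly the same rearrangement.
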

	
	\begin{proof}
		Assume (\ref{qqoinva}) holds and let ${\bf x}$ be any solution of $S_f$.
		Then $f_{k}(\mathbf{x})=\sum\limits_{i=1,i\neq k}^{m}p_{i}f_{i}(\mathbf{x})\leq0$ since $p_i\leq0$ and $f_{i}(\mathbf{x})\geq0$, for $i\in\mathcal{N}_m \backslash \{k\}$. On the other hand, from $f_{k}(\mathbf{x})\geq0$, we obtain $f_{k}(\mathbf{x})=0$.
		Therefore, $f_k({\bf x}) = 0$ for all solution ${\bf x}$ of $S_f$, 
		i.e., $f_k$ is an implied equality of $S_f$.
		
		Now, assume that $f_k$ is an implied inequality of $S_f$, i.e., $f_{k}(\mathbf{x})=0$ for all solution $\mathbf{x}$ of $S_f$.
		This implies that if $\mathbf{x}$ is a solution of $S_f$, then 
		$f_k(\mathbf{x}) \le 0$. In other words, the inequality $f_k(\mathbf{x}) \le 0$
		is implied by the $S_f$. 
		By Lemma \ref{GG}, there exist $q_i \ge 0$, $i \in {\cal N}_m$ such that  
		$$
		-f_{k}(\mathbf{x})\equiv\sum\limits_{i=1}^{m}q_{i}f_{i}(\mathbf{x}).
		$$
		Then,
		$$
		(-1-q_k)f_{k}(\mathbf{x})\equiv\sum\limits_{i=1,i\neq k}^{m}q_{i}f_{i}(\mathbf{x}), $$
		or 
		$$f_{k}(\mathbf{x})\equiv\sum\limits_{i=1,i\neq k}^{m}\left(-\frac{q_{i}}{1+q_k}\right)f_{i}(\mathbf{x}) .
		$$
		Upon letting $p_i=-\frac{q_{i}}{1+q_k}$, where $p_i \le 0$ since $q_i \ge 0$, we obtain (\ref{qqoinva}). This completes the proof.		
	\end{proof}

Let $E_{\bar{f}}$ be the set of all implied equalities of $S_f$. Evidently, $\widetilde{\mathcal{R}}^{-1}(E_{\bar{f}})\subseteq\mathcal{R}^{-1}(S_f)$. 
Next, we give an example to show that if an equality set is imposed,
a pure inequality set can become a non-pure inequality set.

\begin{example}
Let $S_f=\{f_1\geq0,f_2\geq0\}$, where $f_1=x_1+x_2$, $f_2=x_1-x_2$. Evidently, $S_f$ is a pure inequality set. However, if we impose the constraint $x_1=0$, then $S_f$ becomes $\{x_2\geq0,-x_2\geq0\}$, which is a non-pure inequality set. 
\end{example}

\begin{proposition}\label{subsetofpureispure}
A subset of a pure inequality set is a pure inequality set.
\end{proposition}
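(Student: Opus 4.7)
The plan is to argue by contradiction, exploiting the basic monotonicity of solution sets under inclusion of constraints: adding inequalities can only shrink the solution set, so dropping inequalities can only enlarge it.

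First I would set up the notation carefully. Let $S_f = \{f_i \ge 0, i \in \mathcal{N}_m\}$ be pure, and let $S_g \subseteq S_f$ with $S_g = \{g_j \ge 0, j \in \mathcal{N}_{m'}\}$. By Definition~\ref{subset}, the inclusion $S_g \subseteq S_f$ means $\mathcal{R}^{-1}(S_g) \subseteq \mathcal{R}^{-1}(S_f)$, i.e., every polynomial $g_j$ appearing in $S_g$ is also one of the $f_i$'s in $S_f$. Let $\mathrm{Sol}(S)$ denote the solution set of an inequality set $S$. The key observation is then
\[
\mathrm{Sol}(S_f) \subseteq \mathrm{Sol}(S_g),
\]
since any $\mathbf{x}$ satisfying every $f_i \ge 0$ automatically satisfies the subcollection of these same inequalities appearing in $S_g$.

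Next I would suppose for contradiction that $S_g$ is not pure, so by Definition~\ref{def-purein} there exists some $g_k$ with $(g_k \ge 0) \in S_g$ such that $g_k(\mathbf{x}) = 0$ for every $\mathbf{x} \in \mathrm{Sol}(S_g)$. Since $g_k$ is one of the $f_i$'s, say $g_k = f_{i_0}$, and since $\mathrm{Sol}(S_f) \subseteq \mathrm{Sol}(S_g)$, we get $f_{i_0}(\mathbf{x}) = g_k(\mathbf{x}) = 0$ for every $\mathbf{x} \in \mathrm{Sol}(S_f)$. Thus $f_{i_0}$ is an implied equality of $S_f$, contradicting the assumption that $S_f$ is pure. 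This contradiction forces $S_g$ to be a pure inequality set.

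I do not anticipate a serious obstacle here; the whole argument is a one-line contrapositive once the inclusion of solution sets is observed. The only mildly delicate point is to read Definition~\ref{subset} correctly so that ``subset'' really does mean a sub-collection of the same polynomial inequalities (possibly up to the trivial equivalence of Definition~\ref{Ineq-set}), rather than something coarser like inclusion of solution sets. With that reading fixed, the proof is immediate.
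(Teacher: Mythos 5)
Your proof is correct, but it takes a genuinely different route from the paper's. The paper derives the proposition from Lemma~\ref{equa1}, its Farkas-based characterization of implied equalities: if $g_k$ were an implied equality of the subset $S_g$, that lemma writes $g_k \equiv \sum_{i\neq k} p_i g_i$ with all $p_i \le 0$ over the other members of $S_g$; padding this combination with zero coefficients for the members of $S_f$ not in $S_g$ and applying the easy direction of Lemma~\ref{equa1} again exhibits $g_k$ as an implied equality of $S_f$, contradicting purity. You bypass that algebraic characterization entirely and use only Definition~\ref{def-purein} together with the monotonicity $\mathrm{Sol}(S_f) \subseteq \mathrm{Sol}(S_g)$: a member of $S_g$ vanishing on $\mathrm{Sol}(S_g)$ also vanishes on the smaller set $\mathrm{Sol}(S_f)$, and it is a member of $S_f$, so $S_f$ fails to be pure. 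Your argument is more elementary and more general---it never invokes Farkas' lemma, so it does not need the $f_i$ to be linear or homogeneous---whereas the paper's version stays inside the certificate framework (explicit nonpositive coefficient representations) on which its algorithms and later proofs are built. Your closing caveat about Definition~\ref{subset} is also handled correctly: the inclusion is inclusion of polynomial sets, possibly up to the trivial equivalence of Definition~\ref{Ineq-set}, and positive rescalings change neither solution sets nor implied equalities, so they do not disturb the argument.
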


\begin{proof}
The proposition follows immediately from Lemma~\ref{equa1} and Definition~\ref{def-purein}.
\end{proof}

\begin{define}
Let $S_f=\{f_i\geq0,i\in\mathcal{N}_m\}$ and $S_{f'}=\{f'_i\geq0,i\in\mathcal{N}_{m'}\}$ be two inequality sets.
%
If the solution sets of  $S_{f'}$ and $S_f$ are the same,  then we say that $S_{f}$ and $S_{f'}$ are equivalent.
\end{define}

\begin{proposition}\label{Prop2}
If $S_f$ and $S_{f'}$ are equivalent, then every inequality in $S_f$ is implied by $S_{f'}$, and every inequality in $S_{f'}$ is implied by $S_{f}$.
\end{proposition}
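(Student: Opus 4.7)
The plan is to appeal directly to the relevant definitions, since the statement is essentially a tautological reformulation of equivalence in terms of the ``implies'' relation. By hypothesis, $S_f$ and $S_{f'}$ are equivalent, meaning their solution sets coincide; let me denote this common solution set by $V$.

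To show that every inequality in $S_f$ is implied by $S_{f'}$, I would fix an arbitrary $(f_i \ge 0) \in S_f$ and then verify the condition in the definition of ``imply.'' Pick any $\mathbf{x}$ satisfying all inequalities in $S_{f'}$. By definition $\mathbf{x}$ lies in the solution set of $S_{f'}$, which equals $V$, so $\mathbf{x}$ also satisfies every inequality in $S_f$, and in particular $f_i(\mathbf{x}) \ge 0$. This is precisely what is required to say that the inequalities in $S_{f'}$ imply $f_i \ge 0$.

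The second assertion, that every inequality in $S_{f'}$ is implied by $S_f$, follows by exactly the same argument with the roles of $S_f$ and $S_{f'}$ interchanged, using that equivalence is a symmetric relation.

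There is no real mathematical obstacle here; the only point to be careful about is the bookkeeping of definitions (distinguishing the ``solution set'' formulation of equivalence from the ``implies'' formulation used in the conclusion). Note in particular that this proof does not invoke the linearity of the $f_i$ or Farkas-type tools such as Lemma~\ref{GG}; it is a purely set-theoretic consequence of equality of solution sets and is therefore valid for arbitrary (not necessarily homogeneous or linear) inequality systems.
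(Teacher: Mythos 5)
Your proof is correct: the paper states this proposition without proof, treating it as an immediate consequence of the definitions, and your argument supplies exactly that definitional reasoning (equality of solution sets directly yields the ``implies'' relation in both directions). Your closing observation that no linearity or Farkas-type machinery (e.g., Lemma~\ref{GG}) is needed is also accurate and consistent with the paper's implicit treatment.
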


In the rest of the section, we will develop 
a few algorithms for simplifying a linear inequality set constrained  
by a linear equality set. 
\subsection{Dimension Reduction of a set of inequalities by an equality set}

Let $S_f=\{f_i\ge 0, i\in\mathcal{N}_m\}$ be an inequality set and $E_{\tilde{f}}=\{\tilde{f}_i=0,i\in\mathcal{N}_{\widetilde{m}}\}$ be an equality set. 
Recall that
$P_f=\mathcal{R}^{-1}(S_f)=\{f_i,i\in\mathcal{N}_m\}$ and 
$P_{\tilde{f}}=\widetilde{\mathcal{R}}^{-1}(E_{\tilde{f}})=\{\tilde{f}_i,i\in\mathcal{N}_{\widetilde{m}}\}$.
%
%
The following proposition is well known (see for example \cite[Chapter 1]{Lay2016}).
\begin{proposition}\label{rsb-def}
Under the variable order $x_1\prec x_2\prec \cdots\prec x_n$,
 the linear equation system  $E_{\tilde{f}}$
 can be reduced by Gauss-Jordan elimination to the unique form 
\begin{equation}\label{Sta-Basis}
\widetilde{E}=\{x_{k_i}-U_i=0,i\in \mathcal{N}_{\widetilde{n}}\},
\end{equation}
where $k_1<k_2<\cdots<k_{\widetilde{n}}$, $x_{k_i}$ is the leading term of $x_{k_i}-U_i$, $\widetilde{n}$ is rank of the linear system $E_{\tilde{f}}$ and $U_i$ is a linear function in 
$\{x_j, \hbox{ for }k_i<j<k_{i+1}, i\in \mathcal{N}_{\widetilde{n}}\}$, with $k_{i+1}=n+1$ by convention. 
Furthermore, $\sum_{i\in\mathcal{N}_{\tilde{n}}}|U_i|=n-\tilde{n}$.
\end{proposition}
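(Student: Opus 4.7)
The plan is to derive the proposition from the standard theory of Gauss-Jordan elimination for linear systems over $\mathbb{R}$. I would encode $E_{\tilde f}$ as a homogeneous system $A\mathbf{x}=0$ with $A \in \mathbb{R}^{\widetilde m \times n}$ built from the coefficient vectors of the $\tilde f_i$, and then apply Gauss-Jordan elimination with column order $x_1 \prec x_2 \prec \cdots \prec x_n$. The $\widetilde n = \mathrm{rank}\,A$ nonzero rows of the resulting reduced row echelon form (RREF), together with their pivot columns read left to right, will give the indices $k_1 < \cdots < k_{\widetilde n}$, and by construction of RREF each nonzero row translates into an equation $x_{k_i} - U_i = 0$ in which $U_i$ is a homogeneous linear function of non-pivot variables. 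This establishes existence of a reduced form of the claimed shape.

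For uniqueness, I would appeal to the classical result that a matrix admits a unique RREF, or, for a self-contained argument, observe that the pivot column indices are characterized intrinsically as the smallest indices whose columns are linearly independent from the columns preceding them under $\prec$; once these are fixed, the requirements that each pivot entry equal $1$ and be the only nonzero entry in its column force all remaining coefficients. Hence both the indices $k_i$ and the functions $U_i$ are determined by $E_{\tilde f}$ alone, giving uniqueness of $\widetilde E$.

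For the counting identity $\sum_{i \in \mathcal{N}_{\widetilde n}} |U_i| = n - \widetilde n$, I would proceed by a bookkeeping argument. Using the convention $k_{\widetilde n+1} = n+1$, the non-pivot column indices in $\{1,\ldots,n\} \setminus \{k_1,\ldots,k_{\widetilde n}\}$ partition into the blocks $B_i = \{j : k_i < j < k_{i+1}\}$, whose sizes sum to $n - \widetilde n$. If one verifies that each $U_i$ is written using exactly the variables in its corresponding block $B_i$, then $|U_i| = |B_i|$ and summing over $i$ yields the claimed total.

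The main obstacle I anticipate is the structural claim that $U_i$ depends only on variables in $\{x_j : k_i < j < k_{i+1}\}$, because generic Gauss-Jordan elimination allows $U_i$ to involve any non-pivot variable of index $> k_i$, not just those strictly preceding the next pivot. I would reconcile this either by reading the statement as a specific normalization of the reduced form being adopted in the paper, or by interpreting $|U_i|$ as the number of free variables assigned to the $i$-th block (rather than the literal support size of $U_i$), so that the identity becomes a tautological count of non-pivot columns. With this interpretation fixed, the remainder of the argument is routine linear algebra.
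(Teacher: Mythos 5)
Your route --- encode $E_{\tilde f}$ as a homogeneous system $A\mathbf{x}=0$, run Gauss--Jordan with the columns ordered by $\prec$, read the $k_i$ off the pivot columns, and get uniqueness from the uniqueness of the reduced row echelon form --- is exactly the argument the paper intends: the paper gives no proof of this proposition at all, only the remark that it is well known, with a citation to Lay's \emph{Linear Algebra and Its Applications}. So on existence and uniqueness you and the paper coincide, yours being the spelled-out version of the citation.

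The real content of your proposal is the obstacle you raise at the end, and you are right that it is not a removable technicality: read per index $i$, the claim that $U_i$ involves only variables $x_j$ with $k_i<j<k_{i+1}$ is false for general RREF. Take $E_{\tilde f}=\{x_1+x_3=0,\ x_2+x_3=0\}$ with $n=3$: the system is already reduced, $k_1=1$, $k_2=2$, and $U_1=U_2=-x_3$, so $U_1$ uses a variable outside its block and the literal sum of support sizes is $2\neq n-\tilde n=1$. Hence the set in the proposition can only be read as the union over $i$, i.e.\ the set of all non-pivot (free) variables, which is the first reading you propose. But note that your fallback repair --- interpreting $|U_i|$ as the number of non-pivot columns in the $i$-th block --- also fails, in an edge case your partition claim overlooks: for $E_{\tilde f}=\{x_2=0\}$ with $n=2$ we have $k_1=2$, $\tilde n=1$, $U_1=0$, every block $\{j: k_i<j<k_{i+1}\}$ is empty, yet $n-\tilde n=1$; the free variable $x_1$ lies to the \emph{left} of the first pivot and belongs to no block, so the non-pivot columns do not partition into your $B_i$'s. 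The only reading of the final identity that is correct in general is plain rank--nullity: the number of free variables equals $n-\tilde n$. That is also the only way the paper ever uses it (e.g.\ the remark after Algorithm 1 that only $n-\tilde n$ variables appear in $\mathcal{R}(R_f)$), so your proof should state and prove exactly that, rather than attempt the per-block count.
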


\begin{algorithm}[H]
\caption{Dimension Reduction}
\label{division-algorithm}
\begin{algorithmic}[1]

\REQUIRE $S_f$, $E_{\tilde{f}}$.
\ENSURE The remainder set $R_f$.
\STATE Compute $\widetilde{E}$ with $E_{\tilde{f}}$ by Proposition \ref{rsb-def}.
\STATE Substitute $x_{k_i}$ by $U_i$ in $P_{f}$ to obtain a set $R$.
\STATE Let $R_f=R\backslash\{0\}$. 

\RETURN ${\cal R}(R_f)$.
\end{algorithmic}
\end{algorithm}

We call the equality set $\widetilde{E}$ the {\it Jordan normal form} of $E_{\tilde{f}}$. Likewise, we call the polynomial set $\widetilde{R}^{-1}(\widetilde{E})$ the Jordan normal form of 
$\widetilde{R}^{-1}(E_{\tilde{f}})$.
We say reducing $S_f$ by $E_{\tilde{f}}$ to mean using Algorithm~\ref{division-algorithm} to find ${\cal R}(R_f)$.
We also say reducing $P_f$ by $E_{\tilde{f}}$
to mean using Algorithm~\ref{division-algorithm} to find $R_f$, called {\it the remainder set} (or remainder if $R_f$ is a singleton).

\begin{example}
Given a variable order $x_1\prec x_2\prec x_3$, let $S_{f}=\{f_1\ge0,f_2\ge0\}$ and $E_{\tilde{f}}=\{\tilde{f}_1=0,\tilde{f}_2=0,\tilde{f}_3=0\}$, where $f_1=x_1+x_2-x_3$, $f_2=x_2+x_3$, $\tilde{f}_1=x_1+x_2+x_3$, $\tilde{f}_2=x_1+x_2$, and $\tilde{f}_3=x_3$. We write $P_f=\mathcal{R}^{-1}(S_f)=\{f_1,f_2\}$ and $P_{\tilde{f}}=\widetilde{\mathcal{R}}^{-1}(E_{\tilde{f}})=\{\tilde{f}_1,\tilde{f}_2,\tilde{f}_3\}$.

Firstly, we obtain that the rank of $E_{\tilde{f}}$ is $\tilde{n}=2$. Then the
Jordan normal form of $E_{\tilde{f}}$ is given by $\widetilde{E}=\{x_{k_1}-U_1=0,x_{k_2}-U_2=0\}$, where $k_1=1,\ k_2=3$, $U_1=-x_2$, $U_2=0$. 

Using the equality constraints in $\tilde{E}$, we substitute $x_1=-x_2$ and $x_3=0$ into $P_f = \{ f_1, f_2 \}$ to obtain $R=\{0,x_2\}$.
Hence
$R_f=R\backslash\{0\}=\{x_2\}$.
In other words, the inequality set $S_f$ is reduced to $\mathcal{R}(R_f)=\{x_2\geq0\}$ by the equality set $E_{\tilde{f}}$. Note that in $\mathcal{R}(R_f)$, only $n-\tilde{n}=1$ variable, namely $x_2$, appears.

\end{example}

\begin{remark}
\label{4398hqf}
After the execution of Algorithm~\ref{division-algorithm},
the inequality set $S_f$ constrained by the equality set $E_{\tilde{f}}$
is reduced to the inequality set $\mathcal{R}(R_f)$ constrained by the equality set $\tilde{E}$. Therefore, the solution set of `$S_f$ constrained by $E_{\tilde{f}}$' 
in $\mathbb{R}^n$ is the same as the solution set of `$\mathcal{R}(R_f)$ constrained by $\tilde{E}$' in $\mathbb{R}^n$.
\end{remark}


\subsection{The implied equalities contained in a system of inequalities}
\label{sec-I}

In this subsection, we will show how to find all the implied equalities contained in a system of linear inequalities.


Let $S_f=\{f_i\geq0,i\in\mathcal{N}_m\}$ be a given inequality set, where $f_i$ is a linear function in $\textbf{x}$.
The following algorithm, called the Implied Equalities Algorithm, finds all the implied equalities of $S_f$.

\begin{algorithm}[H]
\caption{Implied Equalities Algorithm}
\label{Implied-equality-algorithm}
\begin{algorithmic}[1]
\REQUIRE $S_f$.
\ENSURE The implied equalities in $S_f$.
\STATE Let $E_0:=\sum_{i=1}^{m}{v_if_i}$, where
$V=\{v_i,i\in \mathcal{N}_m\}$ is  a set of variables. 

\STATE Set $E_0\equiv\sum_{j=1}^{n}w_jx_j\equiv0$. Then $W=\{w_j=0,\ j\in \mathcal{N}_n\}$ is a linear system in $V$.

\STATE Solve the linear equations $\{w_j=0,\ j\in \mathcal{N}_n\}$ by Gauss-Jordon elimination to obtain the solution set of $v_i$ of the form $\{v_i=V_i,i\in \mathcal{N}_m\}$, where $d$ is the rank of the linear system $W$ and $V_i$ is a linear function in $m-d$ variables of $V$.

\STATE For every $k\in\mathcal{N}_m$, let $L_k,k=1,\ldots,m$ be the following linear programming problem:
\begin{equation}\label{LP}
\begin{array}{ll}
\ \ \ \ {\rm max}(V_k)\\
{\rm s.t.}\ V_i\geq0,\ i=1,2,\ldots,m.
\end{array}
\end{equation}
\STATE The equality $f_k=0$ is an implied equality of $S_f$ if and only if the optimal value of $L_k$ max$(V_k)>0$.

\RETURN All implied equalities $f_k$'s in $S_f$.
\end{algorithmic}
\end{algorithm}

%
%

With Algorithm \ref{Implied-equality-algorithm}, we can obtain the set of implied equalities of $S_f$, denoted by $E_{\tilde{f}}$. 
%
%
The following example illustrates how we can apply Algorithm~\ref{Implied-equality-algorithm} and then Algorithm~\ref{division-algorithm}
to reduce a given inequality set.
A justification of Algorithm~\ref{Implied-equality-algorithm} is given after the example.

\begin{example} 
Fix the variable order $x_1\prec x_2\prec x_3$. Let $S_f=\{f_1\geq0,f_2\geq0,f_3\geq0,f_4\geq0,f_5\geq0\}$, where $f_1=x_1,\ f_2=x_2-x_1,\ f_3=-x_1,\ f_4=-x_2$ and $f_5=x_2+x_3$.
An application of Algorithm \ref{Implied-equality-algorithm} to $S_f$ yields the following:
\begin{itemize}

\item
Firstly, we let $E_0=\sum_{i=1}^{5}v_if_i=\sum_{j=1}^{3}w_jx_j$. Then we have $V=\{v_1,v_2,v_3,v_4,v_5\}$ and $W=\{w_1=0,w_2=0,w_3=0\}$ with $w_1=v_1-v_2-v_3$, $w_2=v_2-v_4+v_5$ and $w_3=v_5$.

\item
The rank of $W$ is $d=3$. We then solve the linear equations $W$ by Gauss-Jordon elimination
to obtain $\{v_i=V_i,i\in\mathcal{N}_5\}$, where $V_1=v_3+v_4$, $V_2=v_4$, $V_3=v_3$, $V_4=v_4$ and $V_5=0$, from which we can see that $V_i$ is a linear function of the two variables $v_3$ and $v_4$.

\item
Finally, we have the following $5$ linear programming problems:

$L_1:\ \ {\rm max}(v_3+v_4)\ \ \ {\rm s.t.}\ \ v_3+v_4\ge0,\ v_3\ge0,\ v_4\ge0.$

$L_2:\ \ {\rm max}(v_4)\ \ \ \ \ \ \ \ \ {\rm s.t.}\ \ v_3+v_4\ge0,\ v_3\ge0,\ v_4\ge0.$

$L_3:\ \ {\rm max}(v_3)\ \ \ \ \ \ \ \ \ {\rm s.t.}\ \ v_3+v_4\ge0,\ v_3\ge0,\ v_4\ge0.$

$L_4:\ \ {\rm max}(v_4)\ \ \ \ \ \ \ \ \ {\rm s.t.}\ \ v_3+v_4\ge0,\ v_3\ge0,\ v_4\ge0.$

$L_5:\ \ {\rm max}(0)\ \ \ \ \ \ \ \ \ \ {\rm s.t.}\ \ v_3+v_4\ge0,\ v_3\ge0,\ v_4\ge0.$

\item
Observe that $L_2$ and $L_4$ are same, and the optimal value of $L_5$ is $0$. Then, we solve $L_1$ to $L_3$ to obtain that the optimal values are all equal to $+\infty$. 
%
Thus, we obtain the implied equality set, denoted by $E_{\tilde{f}}=\{\tilde{f}_1=0,\tilde{f}_2=0,\tilde{f}_3=0,\tilde{f}_4=0\}$, where $\tilde{f}_1=x_1$, $\tilde{f}_2=x_2-x_1$, $\tilde{f}_3=-x_1$ and $\tilde{f}_4=-x_2$.

\end{itemize}

Upon applying Algorithm~\ref{Implied-equality-algorithm}, the inequality set 
$S_f$ is reduced to the inequality set $S_f^\prime = \{ f_5 \ge 0 \} = \{ x_2+x_3 \ge 0 \}$
constrainted by the equality set $E_{\tilde{f}}$. Finally, apply 
Algorithm~\ref{division-algorithm} with $S_f^\prime$ and $E_{\tilde{f}}$ as inputs to obtain 
$R_f = \{ x_3 \}$.
In other words, the inequality set $S_f$ is reduced to $\{ x_3 \ge 0 \}$ constrained by 
the equality set $E_{\tilde{f}}$ after the applications of Algorithm~\ref{Implied-equality-algorithm} and then 
Algorithm~\ref{division-algorithm}. 


\end{example}

\noindent
\textbf{Justification for Algorithm \ref{Implied-equality-algorithm}}. In Algorithm \ref{Implied-equality-algorithm},
the optimal value of $L_k$ being positive means that we can find a set of values of $v_i,\ i\in\mathcal{N}_m$ satisfying $v_k>0$ and $v_j\geq0$ for $j\neq k$, such that $\sum_{i=1}^{m}{v_if_i}\equiv0$, which can be rewritten as 
\[
f_k\equiv\sum_{i=1,i\neq k}^{m}{\left(-\dfrac{v_i}{v_k}\right)f_i}. 
\]
Since by Lemma~\ref{equa1}, $f_k=0$ is an implied equality if and only if $f_{k}\equiv\sum\limits_{i=1,i\neq k}^{m}p_{i}f_{i}$ with $p_{i}\leq0$ for $i\in\mathcal{N}_m$, 
we see that the equality $f_k=0$ is an implied equality of $S_f$ if and only if the optimal value of $L_k$ is positive.

\subsection{Minimal characterization set}

In this subsection, we first define a minimal characterization set of an inequality set and
prove its uniqueness. Then we present an algorithm to obtain this set.

\begin{define}\label{mcs}
Let ${S}_g=\{g_i\geq0,i\in\mathcal{N}_m\}$ be an inequality set and ${S}_{g'}=\{g'_i\geq0,i\in\mathcal{N}_{m'}\}$ be a subset of $S_{g}$. If

1) ${S}_{g}$ and ${S}_{g'}$ are equivalent, and


2) there is no redundant inequalities in ${S}_{g'}$,
\\
we say that $S_{g'}$ is a minimal characterization set of $S_{g}$.
\end{define}

\begin{define}
Let $S_g=\{g_i\geq0,i\in\mathcal{N}_m\}$ and $S_{g'}=\{g'_i\geq0,i\in\mathcal{N}_{m'}\}$ be two inequality sets. We say
$P_{g'}=\mathcal{R}^{-1}(S_{g'})$ is a minimal characterization set of $P_g=\mathcal{R}^{-1}(S_g)$ if $S_{g'}$ is a minimal characterization set of $S_g$.
\end{define}

\begin{proposition}\label{PropII.2}
Let $S_g=\{g_i\geq0,i\in\mathcal{N}_m\}$ be an inequality set. If $S_{g'}=\{g'_i\geq0,i\in\mathcal{N}_{m'}\}$ is a minimal characterization set of $S_{g}$, then $m'\leq m$ and $0\notin \mathcal{R}^{-1}({S}_{g'})$.
\end{proposition}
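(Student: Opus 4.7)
The plan is to derive both claims as immediate consequences of Definition \ref{mcs}, without invoking the heavier machinery (Farkas' Lemma, implied equalities, etc.) developed earlier in the section. The whole proof is essentially bookkeeping on the definitions of \emph{subset}, \emph{redundant inequality}, and \emph{minimal characterization set}.

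For the first claim $m' \leq m$, I would simply unpack the subset condition built into Definition \ref{mcs}. By Definition \ref{subset}, $S_{g'} \subseteq S_g$ means $\mathcal{R}^{-1}(S_{g'}) \subseteq \mathcal{R}^{-1}(S_g)$. Viewing these as finite polynomial sets of cardinalities $m'$ and $m$ respectively (we may legitimately assume the indexed inequalities in $S_{g'}$ are pairwise distinct, because any duplicate would be trivially redundant by Definition \ref{Ineq-set} and contradict condition 2) of Definition \ref{mcs}), the elementary fact that a subset of a finite set has no larger cardinality yields $m' \leq m$.

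For the second claim $0 \notin \mathcal{R}^{-1}(S_{g'})$, my plan is a short proof by contradiction. Suppose some $g'_k = 0$ for $k \in \mathcal{N}_{m'}$. Then the inequality $g'_k \geq 0$ is the vacuous statement $0 \geq 0$, which is satisfied by every $\mathbf{x} \in \mathbb{R}^n$ and therefore imposes no constraint. Consequently it is implied by the remaining inequalities $\{g'_j \geq 0 : j \in \mathcal{N}_{m'} \setminus \{k\}\}$, so $g'_k \geq 0$ is a redundant inequality of $S_{g'}$. This contradicts condition 2) of Definition \ref{mcs}, and hence $0 \notin \mathcal{R}^{-1}(S_{g'})$. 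As a sanity check, the same conclusion also follows from the standing convention at the start of Section \ref{LI-algo.} that every inequality set consists of nonzero polynomials.

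I do not anticipate any real obstacle. The only point that requires a moment's care is justifying why we may count $m'$ as the number of distinct polynomials in $\mathcal{R}^{-1}(S_{g'})$; once minimality rules out duplicates, both assertions collapse to short, direct verifications from the definitions.
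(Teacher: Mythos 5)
Your proof is correct and follows essentially the same route as the paper's: the subset condition in Definition \ref{mcs} gives $m' \leq m$, and the presence of the zero polynomial would make $0 \geq 0$ a redundant inequality, contradicting condition 2) of minimality. Your extra care in ruling out duplicate polynomials (so that $m'$ genuinely counts distinct elements) is a minor refinement the paper glosses over, but it does not change the argument.
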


\begin{proof}
Since $S_{g'} \subseteq S_q$ by Definition \ref{mcs}, we have $m'\le m$.
In addition, if $0\in \mathcal{R}^{-1}({S}_{g'})$, then $0\ge0$ is a redundant inequality in $S_{g'}$, which contradicts that $S_{g'}$ is a minimal characterization set of $S_{g}$. Thus, $0\notin \mathcal{R}^{-1}({S}_{g'})$.
\end{proof}


The following corollary is immediate from Definition \ref{mcs} and Proposition \ref{subsetofpureispure}.

\begin{corollary}\label{CorIII.1}
A minimal characterization set of a pure inequality set is also a pure inequality set.
\end{corollary}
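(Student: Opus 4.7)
The plan is very short because the corollary is really a two-line assembly of results already on the table. Let me explain what to cite and in what order.

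First I would unwind the hypothesis. We are given a pure inequality set $S_g = \{ g_i \ge 0, i \in \mathcal{N}_m \}$ and a minimal characterization set $S_{g'} = \{ g'_i \ge 0, i \in \mathcal{N}_{m'} \}$ of $S_g$. By Definition \ref{mcs}, being a minimal characterization set of $S_g$ requires $S_{g'}$ to be a subset of $S_g$ (that is the very first clause of the definition, before the equivalence and non-redundancy conditions kick in). So the only structural fact I need to extract from the minimal characterization hypothesis is the set inclusion $S_{g'} \subseteq S_g$.

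Next I would invoke Proposition \ref{subsetofpureispure}, which states that any subset of a pure inequality set is itself pure. Applying it to the inclusion $S_{g'} \subseteq S_g$ with $S_g$ assumed pure immediately yields that $S_{g'}$ is pure. That is the entire argument. The equivalence clause and the non-redundancy clause in Definition \ref{mcs} are not needed; only the subset clause plays a role.

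The main obstacle, to the extent there is one, is purely bookkeeping: making sure that the notion of ``subset'' invoked in Definition \ref{mcs} is the one recorded in Definition \ref{subset} (i.e., $\mathcal{R}^{-1}(S_{g'}) \subseteq \mathcal{R}^{-1}(S_g)$), which is exactly the form of subset that Proposition \ref{subsetofpureispure} is about. Once that compatibility is noted, the corollary follows in one sentence, so I would present the proof as a direct citation chain: Definition \ref{mcs} $\Rightarrow$ $S_{g'} \subseteq S_g$; Proposition \ref{subsetofpureispure} $\Rightarrow$ $S_{g'}$ is pure.
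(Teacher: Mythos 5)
Your proof is correct and matches the paper exactly: the paper states that the corollary is immediate from Definition~\ref{mcs} (which makes a minimal characterization set a subset of the original set) and Proposition~\ref{subsetofpureispure} (a subset of a pure inequality set is pure). Your more careful spelling-out of the subset clause and its compatibility with Definition~\ref{subset} is precisely what the paper's one-line justification implicitly relies on.
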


\begin{theorem}\label{unique}
Let $h_1,\ldots,h_m\in\mathbb{R}_h[\mathbf{x}]$ and
$S_h=\{h_i\geq0,i\in\mathcal{N}_m\}$ be a pure inequality set.
Then the minimal characterization set of $S_h$ is unique.
\end{theorem}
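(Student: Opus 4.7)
The plan is to show that any two minimal characterization sets of $S_h$ are trivially equivalent in the sense of Definition \ref{Ineq-set}, which by the convention at the end of that definition is exactly what we mean by uniqueness. Take two minimal characterization sets $S_{h'}=\{h'_i\ge 0, i\in\mathcal{N}_{m'}\}$ and $S_{h''}=\{h''_j\ge 0, j\in\mathcal{N}_{m''}\}$. Both are equivalent to $S_h$, so they are equivalent to each other, and by Corollary \ref{CorIII.1} both are pure. The goal is to prove that there is a bijection $i\mapsto j(i)$ such that $h'_i$ and $h''_{j(i)}$ are positive scalar multiples of each other.

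Since $S_{h'}$ and $S_{h''}$ are equivalent, Proposition \ref{Prop2} says every $h'_i\ge 0$ is implied by $S_{h''}$ and every $h''_j\ge 0$ is implied by $S_{h'}$. Lemma \ref{GG} then yields nonnegative coefficients $r_{ij},s_{jk}\ge 0$ with
\[
h'_i \;=\; \sum_{j=1}^{m''} r_{ij}\,h''_j, \qquad h''_j \;=\; \sum_{k=1}^{m'} s_{jk}\,h'_k.
\]
Substituting one into the other gives the key identity $h'_i \equiv \sum_{k} t_{ik}\,h'_k$ where $t_{ik}=\sum_j r_{ij}s_{jk}\ge 0$, i.e.\ $(1-t_{ii})\,h'_i \equiv \sum_{k\neq i} t_{ik}\,h'_k$.

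The technical heart, and the step I expect to be the main obstacle, is to squeeze all the diagonal/off-diagonal information out of this identity using \emph{both} minimality and purity. I would split into three cases according to the sign of $1-t_{ii}$. If $1-t_{ii}>0$, dividing gives $h'_i$ as a conic combination of the $h'_k$, $k\neq i$, so by Lemma \ref{GG} the inequality $h'_i\ge 0$ is redundant in $S_{h'}$, contradicting minimality. If $1-t_{ii}<0$, then on any solution $\mathbf{x}$ of $S_{h'}$ the right-hand side is $\ge 0$ while the left-hand side is $\le 0$, forcing $h'_i(\mathbf{x})=0$; this makes $h'_i$ an implied equality, contradicting purity (together with Proposition \ref{PropII.2} which rules out $h'_i\equiv 0$). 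If $1-t_{ii}=0$, the identity becomes $\sum_{k\neq i}t_{ik}h'_k\equiv 0$ as homogeneous linear polynomials; evaluating at any $\mathbf{x}$ in the solution set of $S_{h'}$ forces $t_{ik}h'_k(\mathbf{x})=0$ for each $k$, so every $k$ with $t_{ik}>0$ would supply an implied equality, again contradicting purity. Hence $t_{ii}=1$ and $t_{ik}=0$ for all $k\neq i$.

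From $t_{ii}=\sum_j r_{ij}s_{ji}=1$ there exists $j=j(i)$ with $r_{ij}>0$ and $s_{ji}>0$. For this $j$, $t_{ik}=0$ for $k\neq i$ forces $s_{jk}=0$ for all $k\neq i$, so $h''_j=s_{ji}\,h'_i$, showing $h''_{j(i)}$ and $h'_i$ are trivially equivalent. To finish, I would verify that $i\mapsto j(i)$ is injective: if $j(i_1)=j(i_2)=j$, then $h'_{i_1}$ and $h'_{i_2}$ are both positive multiples of $h''_j$, hence trivially equivalent to each other, making one of them redundant in $S_{h'}$ and contradicting minimality; so $i_1=i_2$. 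The symmetric argument (starting from $S_{h''}$) provides the inverse map, giving $m'=m''$ and the required bijection. By Definition \ref{Ineq-set}, $S_{h'}$ and $S_{h''}$ are regarded as the same set, proving uniqueness.
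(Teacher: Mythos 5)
Your proof is correct, and its skeleton matches the paper's: both take two minimal characterization sets, invoke Proposition \ref{Prop2} and Lemma \ref{GG} twice to obtain mutual conic representations, and substitute one into the other to arrive at the self-representation $(1-t_{ii})\,h'_i \equiv \sum_{k\neq i} t_{ik}\,h'_k$. Where you diverge is in how this identity is exploited, and your route is arguably tighter. The paper argues by contradiction: it assumes some $h'_j$ has no trivially equivalent counterpart, deduces that at least one off-diagonal coefficient is strictly positive, and then evaluates the identity at a solution $\mathbf{x}^*$ satisfying $h'_k(\mathbf{x}^*)>0$ for \emph{all} $k$ simultaneously --- a point whose existence it asserts from purity without proof (it does follow, e.g.\ by summing, for each $k$, a solution on which $h'_k$ is strictly positive and using linearity, but the paper omits this step). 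Your three-way sign analysis on $1-t_{ii}$ avoids that global strict-positivity claim entirely: purity enters only through its definition, via the observation that $1-t_{ii}<0$, or $1-t_{ii}=0$ with some $t_{ik}>0$, would force some $h'_k$ to vanish on the whole solution set, i.e.\ be an implied equality. You then read the trivially equivalent counterpart $h''_{j(i)} = s_{ji}\,h'_i$ directly off the equations $t_{ii}=1$ and $t_{ik}=0$ ($k\neq i$), and your injectivity argument for $i \mapsto j(i)$ mirrors the paper's uniqueness-of-counterpart step. What the paper's approach buys is brevity once the strictly positive point is granted; what yours buys is self-containedness (only the definitions of purity and redundancy plus Lemma \ref{GG} are used) and an explicit construction of the bijection rather than a contradiction resting on an unproven auxiliary fact.
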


\begin{proof}
%

Consider two minimal characterization sets of a pure set of linear inequalities $S_h$, denoted by $\mathcal{S}_{h'}=\{h'_i\geq0,i\in \mathcal{N}_{m_1}\}$ and $\mathcal{S}_{\bar{h}}=\{\bar{h}_i\geq0,i\in \mathcal{N}_{m_2}\}$. By Definition \ref{mcs}, $S_{h'}$ and $S_{\bar{h}}$ are equivalent, and by Corollary~\ref{CorIII.1},
they are both pure inequality sets.
We will prove by contradiction that $\mathcal{S}_{h'}$ and $\mathcal{S}_{\bar{h}}$ are trivially equivalent.

Assume that for some inequality $(h'_{j}\geq0)\in S_{h'}$, we cannot find $(\bar{h}_i\geq0)\in S_{\bar{h}}$ that is trivially equivalent to $h'_{j}\geq0$.
By Proposition \ref{Prop2} and Lemma \ref{GG}, we have 
\[
h'_j\equiv\sum_{i=1}^{m_2}p_i\bar{h}_i,
\]
with $p_i\geq0$.
Without loss of generality, assume that
 $p_{i}>0$ for $i=1,\ldots,\bar{m}_2$ and $p_{i}=0$ for $i=\bar{m}_2+1,\ldots,m_2$, where $2\leq\bar{m}_2\leq m_2$.
%
Again by Lemma \ref{GG}, for all $i \in \mathcal{N}_{m_2}$,
\begin{eqnarray}\label{barh1}
\bar{h}_i\equiv\sum\limits_{k=1}^{m_1}q_{i,k}h'_{k},
\end{eqnarray}
where $q_{i,k}\geq0$. Then
\begin{eqnarray}\label{f_j}
 h'_j\equiv\sum_{i=1}^{\bar{m}_2}p_i\bar{h}_i
 \equiv\sum_{i=1}^{\bar{m}_2}p_i\sum_{k=1}^{m_1}q_{i,k}h'_k.
\end{eqnarray}

Rewrite \eqref{f_j} as
\begin{eqnarray}\label{a3}
\left(1-\sum\limits_{i=1}^{\bar{m}_2}p_iq_{i,j}\right)h'_j(\mathbf{x}) \equiv
\sum\limits_{i=1}^{\bar{m}_2}p_i\sum\limits_{k\in\mathcal{N}_{m_1}\backslash\{j\}}q_{i,k}h'_{k}(\mathbf{x}).
\end{eqnarray}
By collecting the coefficients of $h'_k(\mathbf{x})$ on the RHS, we have
\begin{eqnarray}\label{q9wqfq}
\left(1-\sum\limits_{i=1}^{\bar{m}_2}p_iq_{i,j}\right) h'_j(\mathbf{x}) \equiv
\sum\limits_{k\in\mathcal{N}_{m_1}\backslash\{j\}}  a_k h'_{k}(\mathbf{x}).
\end{eqnarray}
where 
\begin{eqnarray}\label{2finup}
a_k=\sum\limits_{i=1}^{\bar{m}_2}p_iq_{i,k}.
\end{eqnarray}

Now in \eqref{barh1}, for a fixed $i\in\mathcal{N}_{m_2}$,
if $q_{i,k}=0$ holds for all $k=1,\ldots,m_1$ such that $k\neq j$, then we have
\begin{eqnarray}\label{barh}
\bar{h}_i\equiv\sum_{k=1}^{m_1}q_{i,k}h'_k\equiv q_{i,j}h'_j.
\end{eqnarray}
If $q_{i,j}>0$, then $\bar{h}_i$ and $h'_{j}$ are trivially equivalent, contradicting our assumption that there exists no $\bar{h}_i\in \mathcal{S}_{\bar{h}}$ which is trivially equivalent to $h'_{j}$. On the other hand, if $q_{i,j}=0$, then $\bar{h}_i\equiv 0$, which by Proposition \ref{PropII.2} contradicts the assumption that $\mathcal{S}_{\bar{h}}$ is a minimal characterization set of $S_h$.
Thus we conclude that for every $i\in\mathcal{N}_{m_1}$, there exists at least one $k\in\mathcal{N}_{m_1}\backslash\{j\}$ such that $q_{i,k}>0$.
From this and (\ref{2finup}), it is not difficult to see that on the RHS of (\ref{q9wqfq}), there exists
at least one $k \in\mathcal{N}_{m_1}\backslash\{j\}$ such that $a_k > 0$.

Consider a solution $\mathbf{x}^*$ of $S_{h'}$ such that $h'_{k}(\mathbf{x}^*) > 0$ for 
all $k \in \mathcal{N}_{m_1}$.
Such an $\mathbf{x}^*$ exists
because $S_{h'}$ is a pure inequality set.
Substituting $\mathbf{x}=\mathbf{x}^*$ in \eqref{q9wqfq} to yield
\begin{eqnarray}\label{a4}
\left(1-\sum\limits_{i=1}^{\bar{m}_2}p_iq_{i,j}\right) h'_j(\mathbf{x}^*) =
\sum\limits_{k\in\mathcal{N}_{m_1}\backslash\{j\}}  a_k h'_{k}(\mathbf{x}^*).
\end{eqnarray}
Since there exists
at least one $k \in\mathcal{N}_{m_1}\backslash\{j\}$ such that $a_k > 0$, the RHS 
above is strictly positive, which implies that $1-\sum\limits_{i=1}^{\bar{m}_2}p_iq_{i,j}>0$.
It then follows that $h'_{j}$ can be written as a conic combination of $h'_{k}$, $k\in\mathcal{N}_{m_1}\backslash\{j\}$. 
In other words, $h'_{j} \ge 0$ is implied by $h'_{k} \ge 0$, $k\in\mathcal{N}_{m_1}\backslash\{j\}$.
This contradicts that $\mathcal{S}_{h'}$ is a minimal characterization set of $S_h$.

%

Summarizing the above, we have proved that for every $(h'_{j} \ge 0) \in S_{h'}$, we can find an $(\bar{h}_i \ge 0) \in S_{\bar{h}}$ which is trivially equivalent to $h'_{j} \ge 0$. Moreover, $\bar{h}_i$ is unique, which can be seen as
follows. If there exists another $(\bar{h}_{i'} \ge 0) \in S_{\bar{h}}$ which is trivially equivalent to $h'_{j} \ge 0$, then $\bar{h}_i \ge 0$ and $\bar{h}_{i'} \ge 0$ are also trivially equivalent to each other, contradicting that $S_{h'}$ is a minimal characterization set of $S_{h}$. 
In the same way, we can prove that for every $(\bar{h}_{i} \ge 0) \in S_{\bar{h}}$, we can find 
a unique $(h'_j \ge 0) \in S_{h'}$ which is trivially equivalent to $\bar{h}_{i} \ge 0$. Thus, $S_{h'}$ and $S_{\bar{h}}$ are trivially equivalent and have exactly the same number of 
inequalities, which means that the minimal characterization set of a pure inequality set $S_h$ is unique. This completes the proof of the theorem.
\end{proof}

\begin{theorem}\label{V.3}
Let $S_f=\{f_i\geq0,i\in\mathcal{N}_{m_1}\}$ and $S_{g}=\{g_{i},i\in\mathcal{N}_{m_2}\}$
be two pure inequality sets, and
$S_{f'}$ and $S_{g'}$ be their minimal characterization sets respectively. If $S_f$ and $S_g$ are equivalent, then $S_{f'}$ and $S_{g'}$ are trivially equivalent.
\end{theorem}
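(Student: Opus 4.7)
The plan is to reduce the statement to the uniqueness result of Theorem \ref{unique} by considering the union $S_h := S_{f'} \cup S_{g'}$ as a common ``parent'' of which both $S_{f'}$ and $S_{g'}$ qualify as minimal characterization sets.

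First I would establish that $S_{f'}$ and $S_{g'}$ are equivalent. By Definition \ref{mcs}, $S_{f'}$ has the same solution set as $S_f$, and $S_{g'}$ has the same solution set as $S_g$. Since $S_f$ and $S_g$ are equivalent by hypothesis, transitivity gives that $S_{f'}$ and $S_{g'}$ share a common solution set. Moreover, by Corollary \ref{CorIII.1}, both $S_{f'}$ and $S_{g'}$ are pure.

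Next I would show that $S_h$ itself is a pure inequality set. Its solution set is the intersection of the solution sets of $S_{f'}$ and $S_{g'}$, which are equal, so the solution set of $S_h$ coincides with that of $S_{f'}$. For any inequality $f'_i\ge 0$ in $S_{f'}$, purity of $S_{f'}$ produces a point of the common solution set on which $f'_i$ is strictly positive; this point is also a solution of $S_h$, so $f'_i$ is not an implied equality of $S_h$. The symmetric argument handles inequalities in $S_{g'}$. Hence $S_h$ contains no implied equalities and is pure.

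Then I would verify that both $S_{f'}$ and $S_{g'}$ are minimal characterization sets of $S_h$: each is a subset of $S_h$ by construction, each is equivalent to $S_h$ (same solution set), and each has no redundant inequality, this last property being an \emph{intrinsic} property of the set itself, inherited directly from being a minimal characterization set of $S_f$ (resp.\ $S_g$). Applying Theorem \ref{unique} to the pure set $S_h$ then forces its minimal characterization set to be unique up to trivial equivalence in the sense of Definition \ref{Ineq-set}, so $S_{f'}$ and $S_{g'}$ must be trivially equivalent.

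The only potentially delicate point is the verification that $S_h$ is pure, which is precisely where the hypothesis that $S_f$ and $S_g$ are themselves pure (inherited by $S_{f'}$ and $S_{g'}$) is used. Once that observation is made, everything else is bookkeeping and the theorem follows immediately from the uniqueness of the minimal characterization set of a pure inequality set.
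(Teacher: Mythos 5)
Your proposal is correct, but it takes a different route from the paper. The paper's own proof is a two-line remark: it observes that $S_{f'}$ and $S_{g'}$ are pure (Corollary~\ref{CorIII.1}) and equivalent (transitivity), and then says the conclusion ``follows immediately from the \emph{proof} of Theorem~\ref{unique}'' --- the point being that the argument in that proof never uses the fact that the two sets being compared are minimal characterization sets of a \emph{common} parent; it only uses that they are equivalent, pure, and free of redundant inequalities. So the paper reuses the proof of Theorem~\ref{unique} as if it had been stated in this stronger form. You instead keep Theorem~\ref{unique} as a black box and manufacture a common parent to which its literal statement applies: you form $S_h = S_{f'} \cup S_{g'}$, check that its solution set is the common solution set, that it is pure (each inequality is witnessed strictly positive at some point of the common solution set, by purity of $S_{f'}$ or $S_{g'}$), and that each of $S_{f'}$, $S_{g'}$ is a subset of $S_h$, equivalent to $S_h$, and irredundant --- correctly noting that irredundancy is intrinsic to a set and so carries over. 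Your route costs a few extra (easy) verifications but is more modular: it needs only the statement of Theorem~\ref{unique}, whereas the paper's shortcut requires the reader to notice that the proof of Theorem~\ref{unique} establishes more than its statement claims. Both are sound; yours is arguably the cleaner citation discipline, while the paper's is shorter.
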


\begin{proof}
If the two pure inequality sets $S_f$ and $S_g$ are equivalent, then $S_{f'}$ and $S_{g'}$ are pure and equivalent. Thus the theorem follows immediately from the proof of Theorem \ref{unique}.
\end{proof}

Next, we give an example to show that the minimal characterization set of a non-pure inequality set may not be unique.

\begin{example}
Let $S_f=\{f_1\geq0,f_2\geq0,f_3\geq0,f_4\geq0,\}$ be an inequality set, where $f_1=x_1-x_2$, $f_2=x_2$, $f_3=-x_2$, $f_4=x_1$. Evidently, $S_f$ is a non-pure inequality set, and it 
can readily be seen that both $S_{f'}=\{f_1\geq0,f_2\geq0,f_3\geq0\}$ and $S_{f''}=\{f_2\geq0,f_3\geq0,f_4\geq0\}$ are minimal characterization sets of $S_f$. However, $S_{f'}$ and $S_{f''}$ are not trivially equivalent. 
Thus, the minimal characterization set of $S_f$ isn't unique.
\end{example}

%






Let $S_h=\{h_{i}\geq0,\ i\in \mathcal{N}_m\}$ be an inequality set, where $h_i\in \mathbb{R}_h[\mathbf{x}]$.
Based on Lemma \ref{GG}, the following algorithm, called Minimal Characterization Set Algorithm, can be used to obtain a minimal characterization set of $S_h$.

\begin{algorithm}[H]
\caption{Minimal Characterization Set Algorithm}
\label{minimal}
\begin{algorithmic}[1]
\REQUIRE 
$S_h$.

\ENSURE 
A minimal characterization set of $S_h$.\\
\rule{15cm}{0.05em}\\
Set $P_h:=\mathcal{R}^{-1}(S_h)$, $\mathcal{M}:=\mathcal{N}_m$.

\FOR{$k$ from 1 to $m$}


\STATE Let $H_k:=h_k-\sum\limits_{i\in\mathcal{M}\backslash\{k\}}q_{i,k}h_i$, where $T_k=\{q_{i,k},i\in\mathcal{M}\backslash\{k\}\}$ is a set of variables.

\STATE Set $H_k\equiv\sum\limits_{i=1}^{n}Q_{i,k}x_i\equiv 0$. Then $\widetilde{T}_k=\{Q_{i,k}=0,i\in\mathcal{N}_n\}$ is a linear system in $T_k$.


\STATE Solve the linear equations of $\widetilde{T}_{k}$.


\IF{the linear equations of $\widetilde{T}_k$ can be solved}

\STATE Obtain the solution set of $q_{i,k}$ of the form $\{q_{i,k}=\mathcal{Q}_{i,k},i\in\mathcal{M}\backslash\{k\}\}$, where $d_1$ is the rank of the linear system $\widetilde{T}_k$ and $\mathcal{Q}_{i,k}$ is a linear function in $N[\mathcal{M}\backslash\{k\}]-d_1$ variables of $T_k$.

\STATE Let $L_{k}$ be the following linear programming problem:
$$\begin{array}{ll}
&{\rm min}(0)\\
{\rm s.t.}\ &\mathcal{Q}_{i,k}\ge0,\ i\in\mathcal{M}\backslash\{k\}.
\end{array}$$

\IF{$L_k$ can be solved}

\STATE $P_h:=P_h\backslash \{h_k\}$, $\mathcal{M}:=\mathcal{M}\backslash \{k\}$.

\ENDIF
\ENDIF
\ENDFOR

\RETURN $\mathcal{R}(P_h)$.
\end{algorithmic}
\end{algorithm}


\noindent
\textbf{Justification for Algorithm \ref{minimal}}.
Steps 2 to 11 remove the polynomial $h_k$ from $P_h$ if 
it can be expressed as a conic combination of $h_i, i\in\mathcal{M}\backslash\{k\}$.
Iterating over all $k$ from 1 to $m$, the output inequality set $\mathcal{R}(P_h)$
is equivalent to $S_h$ and it is a pure 
inequality set. Hence, it is a minimal characterization set of $S_h$.



\subsection{The reduced minimal characterization set}
In this subsection, we first define the reduced minimal characterization set of a linear inequality set and prove its uniqueness. Then we present an algorithm to obtain this set.

Let $S_f=\{f_i\geq0,i\in\mathcal{N}_m\}$ be a linear inequality set, 
and $E_{\tilde{f}}$ be the set of implied equalities of $S_f$ obtained by applying 
Algorithm~\ref{Implied-equality-algorithm}.
Then we obtain $\tilde{E}$, the Jordan normal form of $E_{\tilde{f}}$, as in Proposition~\ref{rsb-def}.
Let $R_f$ be the remainder set obtained by
reducing $\mathcal{R}^{-1}(S_f) \backslash {\cal R}^{-1}(E_{\tilde{f}})$ by $\widetilde{\mathcal{R}}^{-1}(\widetilde{E})$
using Algorithm~\ref{division-algorithm}.

\begin{theorem}\label{Rf unique}
The set $\mathcal{R}(R_f)$ is a pure inequality set.
\end{theorem}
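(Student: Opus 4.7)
The plan is to argue by contradiction using Lemma~\ref{equa1} together with the key property that Algorithm~\ref{Implied-equality-algorithm} returns \emph{all} implied equalities of $S_f$. Suppose $\mathcal{R}(R_f)$ is not pure, so it contains some implied equality. By Lemma~\ref{equa1}, there is an element $g \in R_f$ and coefficients $p_h \le 0$ for $h \in R_f \setminus \{g\}$ such that
\[
g \equiv \sum_{h \in R_f \setminus \{g\}} p_h \, h
\]
as polynomials (in the reduced variables). By construction of $R_f$ in Algorithm~\ref{division-algorithm}, each element arises from substituting $x_{k_i} \mapsto U_i$ into some $f_j \in P_f \setminus P_{\tilde{f}}$; fix such preimages, writing $g = \widehat{g}$ and $h = \widehat{h}$ for the ``reduce'' operator, where the preimage of $g$ is some $f_k \in P_f \setminus P_{\tilde{f}}$ and the preimage of $h$ is some $f_{j_h} \in P_f \setminus P_{\tilde{f}}$.

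Next I would lift the identity back to the original polynomial ring. Since the substitution $x_{k_i} \mapsto U_i$ coincides with reduction modulo the $\mathbb{R}$-span of $\widetilde{\mathcal{R}}^{-1}(\widetilde{E})$, and Gauss-Jordan elimination preserves row span, the polynomial
\[
f_k - \sum_{h \in R_f \setminus \{g\}} p_h \, f_{j_h}
\]
lies in the $\mathbb{R}$-span of $\widetilde{\mathcal{R}}^{-1}(\widetilde{E}) = \mathrm{span}_{\mathbb{R}}\,\widetilde{\mathcal{R}}^{-1}(E_{\tilde{f}})$. Hence there exist $q_i \in \mathbb{R}$ with
\[
f_k = \sum_{h \in R_f \setminus \{g\}} p_h \, f_{j_h} + \sum_{i \in \mathcal{N}_{\widetilde{m}}} q_i \, \tilde{f}_i .
\]

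Now I evaluate this identity at any solution $\mathbf{x}^*$ of $S_f$. Every $\tilde{f}_i$ vanishes at $\mathbf{x}^*$ because $\tilde{f}_i = 0$ is an implied equality of $S_f$; every $f_{j_h}(\mathbf{x}^*) \ge 0$ because $f_{j_h} \in P_f$; and every $p_h \le 0$. Therefore $f_k(\mathbf{x}^*) \le 0$, and combined with $f_k(\mathbf{x}^*) \ge 0$ (from $f_k \in P_f$), we conclude $f_k(\mathbf{x}^*) = 0$ for every solution $\mathbf{x}^*$ of $S_f$. By Definition~\ref{def-purein}, $f_k = 0$ is an implied equality of $S_f$, so the completeness of Algorithm~\ref{Implied-equality-algorithm} (recorded in its justification) forces $f_k \in \mathcal{R}^{-1}(E_{\tilde{f}}) = P_{\tilde{f}}$, contradicting $f_k \in P_f \setminus P_{\tilde{f}}$. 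Hence $\mathcal{R}(R_f)$ has no implied equalities, i.e.\ it is pure.

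The main technical point, and the step I would be most careful about, is the lifting in the middle paragraph: I need the fact that the Gauss-Jordan substitution is exactly reduction modulo the row span of $\widetilde{\mathcal{R}}^{-1}(E_{\tilde{f}})$, so that the polynomial identity in reduced variables lifts to a genuine identity in $\mathbb{R}_h[\mathbf{x}]$ up to an element of that span. Everything else is a direct application of Lemma~\ref{equa1} and the definitions, together with the observation that if a lifted $f_k$ reduces to a nonzero $g$ it must correspond to a genuine element of $R_f$ (and hence the removal of zero elements in Step~3 of Algorithm~\ref{division-algorithm} does not break the correspondence).
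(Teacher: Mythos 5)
Your proof is correct, and it shares the paper's overall skeleton: assume an implied equality in $\mathcal{R}(R_f)$, pull it back to a preimage $f_k\in\mathcal{R}^{-1}(S_f)\backslash\widetilde{\mathcal{R}}^{-1}(E_{\tilde f})$, show that $f_k$ vanishes at every solution of $S_f$, and contradict the fact that Algorithm~\ref{Implied-equality-algorithm} has already collected \emph{all} implied equalities of $S_f$ into $E_{\tilde f}$. Where you differ is the middle step. The paper never invokes Lemma~\ref{equa1}: it simply writes $\bar f\equiv f_k-\sum_i c_iE_i$, notes that any solution $\mathbf{x}^*$ of $S_f$ is also a solution of $\mathcal{R}(R_f)$ constrained by $\widetilde E$ (Remark~\ref{4398hqf}), hence $\bar f(\mathbf{x}^*)=0$ by the definition of implied equality and $E_i(\mathbf{x}^*)=0$, giving $f_k(\mathbf{x}^*)=0$ directly. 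You instead extract an algebraic certificate $g\equiv\sum_h p_h h$ with $p_h\le 0$ from Lemma~\ref{equa1} (whose nontrivial direction rests on Farkas' lemma), lift it to $f_k=\sum_h p_h f_{j_h}+\sum_i q_i\tilde f_i$ via the observation that the substitution kernel equals the row span of $\widetilde{\mathcal{R}}^{-1}(\widetilde E)$, which equals $\mathrm{span}\,\widetilde{\mathcal{R}}^{-1}(E_{\tilde f})$, and only then evaluate signs at $\mathbf{x}^*$. Your lifting step is sound --- the substitution $x_{k_i}\mapsto U_i$ is a linear map whose kernel is $\mathrm{span}\{x_{k_i}-U_i\}$, and Gauss--Jordan elimination preserves row span --- so nothing is missing; but the detour through Lemma~\ref{equa1} is heavier machinery than needed, since the semantic definition of implied equality already yields vanishing of $\bar f$ on all solutions of $\mathcal{R}(R_f)$ without any Farkas-type certificate. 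What your version buys in exchange is an explicit conic-combination witness and a self-contained treatment of the reduction: you essentially re-derive, algebraically, the part of Remark~\ref{4398hqf} that the paper cites as given.
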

\begin{proof}
Let $\widetilde{E}=\{E_i=0,i\in \mathcal{N}_{\widetilde{n}}\}$, and assume there is an implied equality $(\bar{f}=0)\in \mathcal{R}(R_f)$. 
In the process of obtaining $\bar{f}$,
we substitute $x_{k_i} = U_i, i \in \tilde{\cal N}_{\tilde{n}}$ into some polynomial $f\in\mathcal{R}^{-1}(S_f)$ (cf.\ \ref{Sta-Basis}).
Therefore, we can write 
\begin{equation}
\bar{f} \equiv f - \sum\limits_{i=1}^{\tilde{n}}c_i E_i ,
\label{4q9pv}
\end{equation}
where $c_i$ is the coefficient of $x_{k_i}$ in $f$.
Let ${\bf x}^*$ be a solution of $S_f$.
From Remark~\ref{4398hqf}, we see that 
${\bf x}^*$ is also a solution of $\mathcal{R}(R_f)$ constrained by $\widetilde{E}$,
so that $E_i({\bf x}^*) = 0$ for all $i \in \mathcal{N}_{\widetilde{n}}$.
From (\ref{4q9pv}), we have 
\[
f({\bf x}^*)=\bar{f}({\bf x}^*)-\sum\limits_{i=1}^{\tilde{n}}c_i E_i({\bf x}^*) .
\]
Since $\bar{f} = 0$ is an implied inequality of $S_f$, 
we have $\bar{f}({\bf x}^*) = 0$.
It follows from the above that $f({\bf x}^*)=0$. Since this holds for all solution 
${\bf x}^*$ of $S_f$, we see that $f=0$ is an implied equality of $S_f$, i.e., $(f = 0) 
\in E_{\tilde{f}}$, which is a contradiction to $f \in \mathcal{R}^{-1}(S_f) \backslash {\cal R}^{-1}(E_{\tilde{f}})$. The theorem is proved.
\end{proof}

Since $\mathcal{R}(R_f)$ is a pure inequality set, the minimal characterization set of $\mathcal{R}(R_f)$ is unique. We let $S_{r'}$ be the minimal characterization set of $\mathcal{R}(R_f)$.

\begin{define}\label{def-rmcs}
%
The set $S_M=\widetilde{E}\cup S_{r'}$ is called the reduced minimal characterization set of $S_f$.
\end{define}

\begin{theorem}\label{reducemcs-unique}
The reduced minimal characterization set of $S_f$ is unique.
\end{theorem}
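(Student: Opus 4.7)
The proof plan is to trace through the construction of $S_M$ and verify that each stage produces an object that is determined by $S_f$ alone, so that the union in Definition \ref{def-rmcs} is itself canonical. The construction has three stages, and I would verify uniqueness at each stage in turn.

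First, I would argue that the implied-equality set $E_{\tilde{f}}$ depends only on $S_f$, not on any algorithmic choice. This is essentially Definition~\ref{def-purein}: an $(f_k = 0) \in E_{\tilde{f}}$ is characterized intrinsically as the equalities among those $f_k$ appearing in $S_f$ whose associated inequality vanishes on every solution of $S_f$. Hence $E_{\tilde{f}}$ is canonically determined. Then by Proposition~\ref{rsb-def}, the Jordan normal form $\widetilde{E}$ is the \emph{unique} reduced form of $E_{\tilde{f}}$ under the fixed variable ordering $x_1 \prec \cdots \prec x_n$, so $\widetilde{E}$ is also uniquely determined by $S_f$.

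Second, I would note that Algorithm~\ref{division-algorithm} applied to the set $\mathcal{R}^{-1}(S_f) \setminus \mathcal{R}^{-1}(E_{\tilde{f}})$ with divisor $\widetilde{\mathcal{R}}^{-1}(\widetilde{E})$ is a deterministic substitution $x_{k_i} \mapsto U_i$ followed by deletion of the zero polynomial, so the remainder set $R_f$ is uniquely determined once $\widetilde{E}$ and $S_f$ are fixed. In particular $\mathcal{R}(R_f)$ is canonically associated with $S_f$.

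Third, by Theorem~\ref{Rf unique}, $\mathcal{R}(R_f)$ is a pure inequality set; therefore Theorem~\ref{unique} applies and its minimal characterization set $S_{r'}$ is unique. Combining the three stages, $S_M = \widetilde{E} \cup S_{r'}$ is a union of two uniquely determined sets, and hence is itself uniquely determined by $S_f$, completing the proof.

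The only subtle point — and the one I would treat most carefully — is verifying that $E_{\tilde{f}}$ (and consequently $R_f$) is a genuinely intrinsic object rather than an artifact of Algorithm~\ref{Implied-equality-algorithm}. Once this canonicity is pinned down via Definition~\ref{def-purein}, the remainder of the argument is a clean appeal to the uniqueness results already established (Proposition~\ref{rsb-def}, Theorem~\ref{Rf unique}, and Theorem~\ref{unique}), so no new machinery is required.
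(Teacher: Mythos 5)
Your proof is correct and follows essentially the same route as the paper's: uniqueness of $\widetilde{E}$ via Proposition~\ref{rsb-def}, determinism of the reduction giving uniqueness of $R_f$, and then purity (Theorem~\ref{Rf unique}) plus Theorem~\ref{unique} giving uniqueness of $S_{r'}$, hence of $S_M$. Your explicit verification that $E_{\tilde{f}}$ is intrinsic to $S_f$ (via Definition~\ref{def-purein}) rather than an artifact of Algorithm~\ref{Implied-equality-algorithm} is a point the paper leaves implicit, and your citations are in fact more precise than the paper's, which attributes purity of $\mathcal{R}(R_f)$ to Theorem~\ref{unique} instead of Theorem~\ref{Rf unique}.
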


\begin{proof}
Fix the variable order $x_1\prec x_2\prec \cdots\prec x_n$. By Proposition \ref{rsb-def}, the reduced standard basis $\widetilde{\mathcal{R}}^{-1}(\widetilde{E})$ is unique, which yields that the remainder set $R_f$ is unique. Since $\mathcal{R}({R_f})$ is a pure inequality set
by Theorem~\ref{unique}, the minimal characterization set of $\mathcal{R}({R_f})$ is unique.
Hence, $S_M$ is unique.
\end{proof}

In the following, we present an algorithm to find the reduced minimal characterization set of a linear inequality set.

\begin{algorithm}[H]
\caption{Reduced Minimal Characterization Set Algorithm}
\label{reducemcs-of-S_f}
\begin{algorithmic}[1]
\REQUIRE
$S_f$.
\ENSURE
The reduced minimal characterization set of $S_f$.

\STATE Apply Algorithm \ref{Implied-equality-algorithm} to find the implied equality set of $S_f$, denoted by $E_{\tilde{f}}$.

\STATE 
Apply Algorithm~\ref{division-algorithm} to reduce $\mathcal{R}^{-1}(S_f) \backslash {\widetilde{\cal R}}^{-1}(E_{\tilde{f}})$ by $E_{\tilde{f}}$ to obtain $R_f$. \\


\STATE Apply Algorithm \ref{minimal} to obtain the minimal characterization set of $\mathcal{R}(R_f)$, denoted by $S_{r'}$.

\RETURN 
$S_{M}=\widetilde{E}\cup S_{r'}$.
\end{algorithmic}
\end{algorithm}

By Proposition \ref{rsb-def} and Theorems \ref{V.3} and \ref{reducemcs-unique}, we immediately obtain the following theorem.
\begin{theorem}
For two equivalent inequality sets, their reduced minimal characterization sets are same.
\end{theorem}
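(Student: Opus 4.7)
My plan is to show that for equivalent inequality sets $S_f$ and $S_g$, the three ingredients that make up the reduced minimal characterization set, namely the Jordan normal form $\widetilde{E}$ and the minimal characterization set $S_{r'}$ of $\mathcal{R}(R_f)$, agree in both cases. First I would establish the key lemma that the linear span of $E_{\tilde{f}}$, regarded as a subspace of $\mathbb{R}_h[\mathbf{x}]$, depends only on the solution set of $S_f$. Concretely, let $V$ denote the subspace of linear forms vanishing on the common solution set $X$. If $L\in V$, then both $L\ge 0$ and $-L\ge 0$ are implied by $S_f$, so by Lemma \ref{GG} one has $L=\sum_i a_i f_i$ with $a_i\ge 0$. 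For any $\mathbf{x}\in X$, $\sum_i a_i f_i(\mathbf{x})=0$ with every term nonnegative, forcing $a_j f_j(\mathbf{x})=0$; hence whenever $a_j>0$, $f_j$ vanishes on $X$ and is therefore an implied equality in $E_{\tilde{f}}$. This shows $V\subseteq \operatorname{span}(E_{\tilde{f}})$, and the reverse inclusion is immediate. Applying the same argument to $S_g$ gives $\operatorname{span}(E_{\tilde{f}})=V=\operatorname{span}(E_{\tilde{g}})$.

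Next, since the Jordan normal form of Proposition \ref{rsb-def} is uniquely determined by the row space of the linear system (under a fixed variable order), the equality of spans established above yields $\widetilde{E}_f=\widetilde{E}_g$; call this common set $\widetilde{E}$. Then the substitutions $x_{k_i}=U_i$ used in Algorithm \ref{division-algorithm} are identical in both reductions, so the remainder sets $R_f$ and $R_g$ live in the same $n-\widetilde{n}$ free variables. By Remark \ref{4398hqf}, the solution set of $\mathcal{R}(R_f)$ together with $\widetilde{E}$ equals the solution set of $S_f$, and likewise for $S_g$; after projecting onto the free coordinates, this gives $\mathcal{R}(R_f)$ and $\mathcal{R}(R_g)$ the same solution set in $\mathbb{R}^{n-\widetilde{n}}$, i.e.\ they are equivalent inequality sets. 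By Theorem \ref{Rf unique}, each is pure.

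With the two pure equivalent inequality sets $\mathcal{R}(R_f)$ and $\mathcal{R}(R_g)$ in hand, Theorem \ref{V.3} tells me that their minimal characterization sets $S_{r'}^{f}$ and $S_{r'}^{g}$ are trivially equivalent; by the convention of Definition \ref{Ineq-set}, they are treated as the same set. Combining this with $\widetilde{E}_f=\widetilde{E}_g$, Definition \ref{def-rmcs} yields $S_M^{f}=\widetilde{E}\cup S_{r'}^{f}=\widetilde{E}\cup S_{r'}^{g}=S_M^{g}$, proving the theorem. The uniqueness asserted in Theorem \ref{reducemcs-unique} is used implicitly to ensure that the constructions produce well-defined objects independent of irrelevant choices.

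The main obstacle is the first step, i.e.\ establishing that the implied equalities span the full vanishing subspace rather than only a proper subspace of it. This is what licenses invoking Proposition \ref{rsb-def} to equate the two Jordan normal forms, and everything downstream follows by assembling the previously proved uniqueness and equivalence results. Once that span identity is in place, the rest is a straightforward application of Theorem \ref{V.3} together with bookkeeping about free variables after reduction.
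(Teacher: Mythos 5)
Your proof is correct, and at the top level it follows the route the paper itself intends: the paper's entire ``proof'' of this theorem is the one-line assertion that it follows immediately from Proposition~\ref{rsb-def}, Theorem~\ref{V.3} and Theorem~\ref{reducemcs-unique}. What you add is the one step that assertion actually glosses over. Two equivalent inequality sets can have literally different implied-equality sets as sets of polynomials, so the uniqueness of the Jordan normal form in Proposition~\ref{rsb-def} cannot by itself identify $\widetilde{E}_f$ with $\widetilde{E}_g$; one first needs that the two implied-equality sets have the same linear span. Your key lemma --- that the implied equalities of $S_f$ span exactly the subspace of linear forms vanishing on the solution set, proved by combining Lemma~\ref{GG} with the termwise-vanishing argument for a conic combination that sums to zero on the solution set --- supplies precisely this missing bridge, and it is the only substantive ingredient not present (even implicitly cited) in the paper. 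Everything downstream in your argument (identical substitutions in Algorithm~\ref{division-algorithm}, equivalence of $\mathcal{R}(R_f)$ and $\mathcal{R}(R_g)$ in the free variables via Remark~\ref{4398hqf}, purity via Theorem~\ref{Rf unique}, then Theorem~\ref{V.3} and the convention of Definition~\ref{Ineq-set} to conclude $S_M^f=S_M^g$) is exactly the bookkeeping the paper's citation pattern presupposes. So the comparison is: same skeleton, but your version is strictly more complete --- it turns the paper's ``immediately'' into an actual deduction, at the cost only of length.
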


%
Note that for a pure inequality set, the minimal characterization set is exactly the reduced minimal characterization set.


\begin{remark}
Since the basic inequalities contain no implied inequality and hence
form a pure inequality set, the elemental inequalities form the minimal characterization set of the basic inequalities. 
In fact, for a fixed number of random variables, Algorithm \ref{reducemcs-of-S_f} can be used to compute the reduced minimal characterization set of the basic inequalities under the constraint of an equality set and possibly an inequality set (used for example, for including some non-Shannon-type inequalities).
\end{remark}

\section{The $s$-Variables}
\label{sec-s-var}

The $I$-Measure \cite{Yeung1991} gives a set-theoretic interpretation of Shannon's information measure. In this section, we first give a brief introduction to the $I$-Measure. The readers are referred to \cite[Chapter 3]{Yeung2008} for a detailed discussion.
Then we introduce the $s$-variables which facilitate the implementation of the algorithms
to be developed in Section~\ref{LI-algo.}.

Consider random variables $X_i,\ i=1,\ldots,n$ which are jointly distributed, 
and let $\tilde{X}_i$ be a set variable corresponding to the random variable $X_i$.
Define the universal set $\Omega$ to be $\cup_{i=1}^n\tilde{X}_i$ and let $\mathcal{F}_{n}$ be the $\sigma$-field generated by $W=\{\tilde{X}_i,i=1,\ldots,n\}$. The atoms of $\mathcal{F}_{n}$ have the form $\cap_{i=1}^nY_i$, where $Y_i$ is either $\tilde{X}_i$ or $\tilde{X}_i^{c}$.
Let $\mathcal{A}_{n}\subset \mathcal{F}_{n}$ be the set of all atoms of $\mathcal{F}_{n}$ except for $\cap_{i=1}^n\tilde{X}_i^c$, which is $\emptyset$, the empty set.
Note that $|\mathcal{A}_n|=2^n-1$.
To simplify notations, we shall use $X_G$ to denote $(X_i,i\in G)$, and $\tilde{X}_{G}$ to denote $\cup_{i\in G}\tilde{X}_i$.
%

The $I$-measure $\mu^*$, which is a signed measure on $\mathcal{F}_n$, is constructed by defining $\mu^*{(\tilde{X}_G)}=H(X_G)$ for all nonempty subsets $G$ of $\mathcal{N}_n$. It is consistent with all Shannon's information measure in the sense that the following holds for all (not necessarily disjoint) subsets $G,G',G''$ of $\mathcal{N}_n$ where $G$ and $G'$ are nonempty:
$$\mu^*(\tilde{X}_G\cap\tilde{X}_{G'}-\tilde{X}_{G''})=I(X_G;X_{G'}|X_{G''}).$$

To facilitate the discussion in this paper, we introduce the concept of {\it $s$-variables}, which have the form $s_{i_1,i_2,\ldots,i_n}$,  $i_1,i_2,\ldots,i_n\in \mathcal{N}_n$.
For an integer set $S \subset\mathcal{N}_n$, we denote its minimum by {\rm min}$(S)$.

\begin{define}\label{s-v}
Let $\mu^*$ be an unspecified $I$-measure of $\mathcal{F}_n$. Let $A=\cap_{i=1}^nY_i$ be an atom in $\mathcal{A}_n$
and $S$ be the subset of $\mathcal{N}_n$
such that $Y_i=\tilde{X}_i$ for $i\in S$ and $Y_i=\tilde{X}_i^c$ for $i\in S^c=\mathcal{N}_n\backslash S$.
Replace the integers $i\in S^c$ in the sequence $[n]$ by $*$ to yield the sequence $B_*$.
Then replace all the $*$'s in $B_*$ by {\rm min}$(S)$ to yield another sequence $B_s$.
Let $s_{i_1,i_2,\ldots,i_n}=\mu^*(A)$, where $[i_1,i_2,\ldots,i_n]=B_s$. The variable $s_{i_1,i_2,\ldots,i_n}$ is called the $s$-variable associated with the atom $A$.

\end{define}

Note that in the above definition, there is a one-to-one correspondence between the $s$-variable $s_{i_1,i_2,\ldots,i_n}$ and the atom~$A$. On the one hand, the $s$-variable can be obtained from an atom $A$ as described above. On the other hand, we can determine the associated atom $A$ from the $s$-variable $s_{i_1,i_2,\ldots,i_n}$ through $S$, with $A=\cap_{i=1}^nY_i$, where $Y_i=\tilde{X}_i$ for $i\in S$ and $Y_i=\tilde{X}_i^c$ for $i\in S^c$. This is illustrated in the example below.



\begin{example}\label{*-theory}
Given the atom $A=\tilde{X}_1\cap\tilde{X}_2^c\cap\tilde{X}_3\cap\tilde{X}_4^c$, we have $S = \{1,3\}$ and $S^c=\{2,4\}$, and $B_*=[1,\ast,3,\ast]$. Replace all the $\ast$'s in $B_*$ by the smallest element in $S$ to yield $B_s=[1,1,3,1]$. Then $s_{1,1,3,1}=\mu^*(A)$ is the $s$-variable corresponding to the atom $A$. On the other hand, from $s_{1,1,3,1}$, we can obtain $S = \{1,1,3,1\}=\{1,3\}$, from which $A$ can be determined.
\end{example}

We now introduce some further notations. Let $t=s_{i_1,i_2,...,i_n}$ be an $s$-variable.
The set $L(t)=\{i_1,i_2,...,i_n\}$ is called the {\it subscript set} of $t$.
The sequence  $\mathcal{L}(t)=[i_1,i_2,...,i_n]$ is called the {\it subscript sequence} of $t$.
The number of elements in the subscript set is denoted by $N[L(t)]$, and
the length of the subscript sequence, denoted by $N[\mathcal{L}(t)]$, is equal to $n$.

For splitting an $s$-variable $s_{i_1,i_2,...,i_n}$, we mean adding an element to $L(s_{i_1,i_2,...,i_n})$ and yielding two new $s$-variables $s_{i_1,i_2,...,i_n,i_1}$ and $s_{i_1,i_2,...,i_n,n+1}$. Note that if $s_{i_1,i_2,...,i_n}$ corresponds to an atom $A\in \mathcal{A}_n$, then $s_{i_1,i_2,...,i_n,i_1}$ and $s_{i_1,i_2,...,i_n,n+1}$ correspond to the atoms $A\cap\tilde{X}_{n+1}^c$ and $A\cap\tilde{X}_{n+1}$ in $\mathcal{A}_{n+1}$, respectively.

%

\begin{define}\label{ab}
For $S\subset N_{>0}$ and $a,b\in N_{>0}$, we introduce the following shorthand notations:

` $a\vee b\ \in S$ ' means $a$ $\in S$ {\rm or} $b$ $\in S$,

` $a\wedge b\ \in S$ ' means $a$ $\in S$ {\rm and} $b$ $\in S$,

` $a\backslash b\ \in S$ ' means $a\in S$ {\rm and} $b\notin S$,

` $a\vee b\ \notin S$ ' means $a\notin S$ and $b\notin S$.\footnote{Equivalently, $a\vee b\notin S$ means $\sim(a\in S\ {\rm or}\ b\in S)$.}

\end{define}

Based on Definition \ref{ab}, for $c,d\in N_{>0}$, we further have the following:

` $a\wedge (b\vee c)\ \in S$ ' means $a\in S$ and $b\vee c\ \in S$,

` $(a\vee b)\backslash (c\vee d)\ \in S$ ' means $a\vee b\ \in S$ and $c\vee d\ \notin S$,

` $(a\wedge (b\vee c))\backslash d\ \in S$ ' means $a\wedge (b\vee c)\ \in S$ and $d\ \notin S$.

For $n\in N_{>0}$, let $S_{n}$ be the set of $s$-variables of all the atoms in $\mathcal{A}_n$. Note that $S_{n+1}$ can be obtained from $S_n$. We first illustrate the case $n=1$.
First of all, $S_1=\{s_{1}\}$, where $s_1=\mu^*(\tilde{X}_1)$.
Then, we split $s_1$ to obtain $s_{1,1}$ and $s_{1,2}$ in $S_2$, where $s_{1,1}=\mu^*(\tilde{X}_1-\tilde{X}_2)$ and $s_{1,2}=\mu^*(\tilde{X}_1\cap\tilde{X}_2)$. By also including the additional variable $s_{2,2}=\mu^*(\tilde{X}_2-\tilde{X}_1)$, we obtain $S_2=\{s_{1,1}, s_{1,2}, s_{2,2}\}$.

In general, we can obtain $S_{n+1}$ from $S_n$ as follows.
For every $s$-variable $s_{i_1,i_2,...,i_{n}}$ in $S_{n}$, we split $s_{i_1,i_2,...,i_{n}}$ to obtain $s_{i_1,i_2,...,i_{n},i_1}$ and $s_{i_1,i_2,...,i_{n},n+1}$ in $S_{n+1}$. Then we obtain $S_{n+1}$ by including the additional variable $s_{n,n,...,n}$ with $N[\mathcal{L}(s_{n,n,...,n})]=n+1$.

As illustrations of the use of the notations we have introduced, we state the following which can readily be verified:


1) $H(X_a,X_b)\,{=}\,\sum t$ \ {for $t$ such that} \ $a\vee b\ \in L(t)$,

2) $I(X_a;X_b)\,{=}\,\sum t$ \ {for $t$ such that} \ $a\wedge b\ \in L(t)$,

3) $H(X_a|X_b)\,{=}\,\sum t$ \ {for $t$ such that} \ $a\backslash b\ \in L(t)$,

4) $I(X_a;(X_b,X_c))\,{=}\,\sum t$ \ {for $t$ such that} \ $a\wedge(b\vee c)\ \in L(t)$,

5) $H(X_a,X_b|X_c,X_d)\,{=}\,\sum t$ \ {for $t$ such that} \ $(a\vee b)\backslash(c\vee d)\ \in L(t)$,

6) $I((X_a;(X_b,X_c))|X_d)\,{=}\,\sum t$ \ {for $t$ such that} \ $(a\wedge (b\vee c))\backslash d\ \in L(t)$.

For example, for three random variables $X_1,\ X_2,\ X_3$, we have the following:
\\[0.2cm]
$H(X_1,X_2)\,{=}\,\mu^*(\tilde{X}_1\cup\tilde{X}_2)=\sum\limits_{1\vee 2\ \in L(s_{i,j,k})}s_{i,j,k}=
s_{1,1,1}+s_{1,1,3}+s_{1,2,1}+s_{1,2,3}+s_{2,2,2}+s_{2,2,3}$,
\\[0.2cm]
$I(X_1;X_2)\,{=}\,\mu^*(\tilde{X}_1\cap\tilde{X}_2)=\sum\limits_{1\wedge 2\ \in L(s_{i,j,k})}s_{i,j,k}=s_{1,2,1}+s_{1,2,3}$,
\\[0.2cm]
$H(X_1,X_2|X_3)\,{=}\,\mu^*(\tilde{X}_1\cup\tilde{X}_2-\tilde{X}_3)=\sum\limits_{(1\vee2)\backslash 3\ \in L(s_{i,j,k})}s_{i,j,k}=s_{1,1,1}+s_{1,2,1}+s_{2,2,2}$,
\\[0.2cm]
$I(X_1;X_2|X_3)\,{=}\,\mu^*(\tilde{X}_1\cap\tilde{X}_2-\tilde{X}_3)=\sum\limits_{(1\wedge 2)\backslash 3\ \in L(s_{i,j,k})}s_{i,j,k}=s_{1,2,1}$.
\\[0.2cm]
Using this set of notations, we can express a Shannon's information measure as a linear polynomial in the $s$-variables which are indexed by subscript sequences, so that they can be conveniently represented in a computer implementation.

\begin{define}[s-variable order]

Let $t_1=s_{i_1,i_2,\ldots,i_n}$ and $t_2=s_{j_1,j_2,\ldots,j_n}$ be two $s$-variables.
We write $t_1\succ t_2$ if one of the following conditions is satisfied:
\begin{enumerate}
\item[1)] $N[L(t_1)]>N[L(t_2)]$, 
\item[2)] $N[L(t_1)]=N[L(t_2)]$, $i_l=j_l$ for $l=1,\ldots,k-1$ and $i_k<j_k$.
\end{enumerate}
\end{define}

\begin{define}
For $n\in N_{>0}$, let $S_{n}$ be the set of $s$-variables. The associated $s$-variable sequence $\mathcal{S}_n$ is obtained by ordering the elements in $S_n$ according to the $s$-variable order.
\end{define}

For example, the $s$-variable sequence $\mathcal{S}_3$ is
$
[s_{1,2,3}, s_{1,1,3}, s_{1,2,1}, s_{2,2,3}, s_{1,1,1}, s_{2,2,2}, s_{3,3,3}].
$
The $s$-variable order is employed in the computational procedures to be 
discussed in the next section for the convenience of implementation.

\section{Procedures for proving information inequalities and identities}
\label{procedures}

In this section, we present two procedures for proving information inequalities and identities under the constraint of an inequality set and/or an equality set.
They are designed in the spirit of Theorem~\ref{LP-S}.


\subsection{Procedure I: Proving Information Inequalities}
\noindent
\textbf{Input:}\\ 
Objective information inequality: $\bar{F}\ge0$.\\ 
Additional constraints: $\bar{C}_i=0,\ i=1,\ldots,r_1$; $\bar{C}_j\ge0,\ j=r_1+1,\ldots,r_2$.\\
Element information inequalities: $\bar{C}_k\ge0,\ k=r_2+1,\ldots,r_3$.\\
// \ Here, $\bar{F},\ \bar{C}_i,\ \bar{C}_j,$ and $\bar{C}_k$ are linear combination of information measures.

\noindent
\textbf{Output:} A proof of $\bar{F}\ge0$ if feasible.

Step 1. Construct the $s$-variable set $S_n$ and the associated $s$-variable sequence
 $\mathcal{S}_n$.

Step 2. Transform $\bar{F},\ \bar{C}_i,\ \bar{C}_j$ and $\bar{C}_k$ to linear polynomials $F$, $C_i$, $C_j$ and $C_k$ in $S_n$ respectively.

\noindent
// \
We need to solve \\
// \ \textbf{Problem} $\mathbf{P_1}$: Determine whether $F \ge 0$ is implied by 
\[
\begin{array}{ll}
& C_i=0,\ i=1,\ldots,r_1,\\
& C_j\ge0,\ j=r_1+1,\ldots,r_2,\\
& C_k\ge0,\ k=r_2+1,\ldots,r_3.
\end{array}
\]

Step 3. Apply Algorithm 1 to reduce  $\{C_l,l\in\mathcal{N}_{r_3}\backslash\mathcal{N}_{r_1}\}$ by 
$\{C_l = 0,l\in\mathcal{N}_{r_1}\}$ to obtain the Jordan normal form of
\\ 
\indent\indent 
 $\{C_l,l\in\mathcal{N}_{r_1}\}$,
denoted by $B$, and 
the remainder set, denoted by $\mathbf{C}_1=\{g_i,i\in\mathcal{N}_r\}$.

Step 4. Apply Algorithm \ref{reducemcs-of-S_f} to obtain the reduced minimal characterization set of $\mathcal{R}(\mathbf{C}_1)$, denoted by \\
\indent\indent  $S_M=\widetilde{E}\cup S_{r'}$. Write $S_{r'}=\{\mathbb{C}_j\ge0,j\in\mathcal{N}_{t_2}\}$.

Step 5. Let $G=\widetilde{\mathcal{R}}^{-1}(\widetilde{E})\cup B$ and compute the Jordan normal form of $G$, denoted by $\mathcal{B}=\{\mathcal{C}_{i},i\in\mathcal{N}_{t_1}\}$.

\noindent
// \ In the above, the inequality set $\mathcal{R}(\mathbf{C}_1)$ is generated 
by reducing $\{C_l \ge 0,l\in\mathcal{N}_{r_3}\backslash\mathcal{N}_{r_1}\}$
by $\{C_l = 0,l\in\mathcal{N}_{r_1}\}$, and \\
// \ the inequality set 
$S_{r'}$ is generated by further reducing $\mathcal{R}(\mathbf{C}_1)$
by own implied equalities, which is equivalent to $\widetilde{E}$. \\
// \ Therefore,
in $S_{r'}$, only the free variables in the Jordan normal form $\mathcal{B}$ are involved.

Step 6. Reduce $F$ by $\widetilde{R}(\mathcal{B})$ to obtain the remainder $F_1$. 

\noindent
// \ In both $F_1$ and $S_{r'}$, only the free variables in the Jordan normal form $\mathcal{B}$ are involved. \\
// \ The original Problem ${\rm P}_1$ is now transformed into \\
\noindent
// \ \textbf{Problem $\mathbf{P_2}$}: Determine whether ${F}_1 \ge 0$ is implied by the inequalities in $S_{r'}$, i.e.,
\[
\begin{array}{ll}
& \mathbb{C}_i\ge0,\ j=1,\ldots,t_2.
\end{array}
\]
\noindent

\noindent
// \ Since the equality set $\widetilde{R}(\mathcal{B})$ contains only constraints on the 
pivot variables in $\mathcal{B}$, it is ignored in formulation of \\
// \ Problem ${\rm P}_2$. 
The remaining steps follow Algorithm~\ref{minimal}.

Step 7. Let $x_{j},j\in\mathcal{N}_{n_1}$ be the variables in Problem ${\rm P}_2$. Let $F_2=F_1-\sum_{i=1}^{t_2}p_i\mathbb{C}_i$, where \\
\indent\indent $P= \{p_i,i\in\mathcal{N}_{t_2}\}$
is a set of variables. Set $F_2\equiv\sum_{j=1}^{n_1}q_jx_j\equiv0$. Then $Q=\{q_j=0,j\in\mathcal{N}_{n_1}\}$ \\
\indent\indent is a linear system in $P$.

Step 8. If the linear system $Q$ has no solution, declare that the objective
information inequality $\bar{F}\ge0$ \\
\indent\indent is `Not Provable' and terminate the procedure. 

Step 9. Otherwise, solve the linear equations $\{q_j=0,j\in\mathcal{N}_{n_1}\}$ by Gauss-Jordan elmination to obtain \\
\indent\indent the solution set of $p_i$ in the form $\{p_i=P_i,i\in\mathcal{N}_{t_2}\}$, 
where $P_i$ is
a linear function in $t_2-d_2$ variables \\
\indent\indent of $P$ and $d_2$ is the rank of the linear system $Q$.

Step 10. If $P_i \in \mathbb{R}_{<0}$ (the set of negative real numbers) for some $i\in\mathcal{N}_{t_2}$, declare `Not Provable'.

Step 11. Otherwise, let $S_P$ be the set $\{P_i,i\in\mathcal{N}_{t_2}\}$, and let $\bar{S}_P=S_P\backslash \mathbb{R}$. Write $\bar{S}_P=\{\bar{P}_i,\ i\in\mathcal{N}_{t_3}\}$. \\
\indent\indent If $\bar{S}_P$ is empty, the objective information inequality $\bar{F}$ is proved. Otherwise go to Step 12.

Step 12. \textbf{Problem $\mathbf{P_3}$}:
\[
\begin{array}{ll}
&{\rm min}(0)\\
{\rm s.t.}
& \bar{P}_{i}\ge0,\ i=1,\ldots,t_3 .
\end{array}
\]
\indent\indent If the above LP has a solution, the objective information inequality $\bar{F} \ge 0$ is proved. Otherwise, \\
\indent\indent declare `Not Provable'.

\begin{remark}
Let $N_v(P_1)$, $N_v(P_2)$ and $N_v(P_3)$ be the number of variables in Problems $P_1$, $P_2$ and $P_3$ respectively. Let $N_c(P_1)$, $N_c(P_2)$ and $N_c(P_3)$ be the number of constraints in Problems $P_1$, $P_2$ and $P_3$ respectively.  It is clear that $N_v(P_1)\ge N_v(P_2) \ge N_v(P_3)$, and $N_c(P_1)\ge N_c(P_2) \ge N_c(P_3)$. The reduction of the number of variables and the number of constraints is in general significant. Since most of the computation in the procedure is attributed to solving the LP in Problem P3, compared with the approach in Theorem~\ref{LP-S} where a much larger LP needs to be solved, the efficiency can be significantly improved.
Example~\ref{example-1} illustrates this point.
\end{remark}

\subsection{Procedure II: Proving Information Identities}

\noindent
\textbf{Input:} \\
Objective information identity: $\bar{F}=0$.\\
Additional constraints: $\bar{C}_i=0,\ i=1,\ldots,r_1$; $\bar{C}_j\ge0,\ j=r_1+1,\ldots,r_2$.\\
Element information inequalities: $\bar{C}_k\ge0,\ k=r_2+1,\ldots,r_3$.\\
Here, $\bar{F},\ \bar{C}_i,\ \bar{C}_j,$ and $\bar{C}_k$ are linear combination of information measures.

\noindent
\textbf{Output:}
A proof of $\bar{F}=0$ if feasible.

Step 1. Construct the $s$-variable set $S_n$ and the associated $s$-variable sequence
 $\mathcal{S}_n$.

Step 2. Transform $\bar{F},\ \bar{C}_i,\ \bar{C}_j$ and $\bar{C}_k$ to linear polynomials $F$, $C_i$, $C_j$ and $C_k$ in $S_n$ respectively.

\noindent
// \
We need to solve \\
// \ \textbf{Problem} $\mathbf{P_1}$: Determine whether $F = 0$ is implied by 
\[
\begin{array}{ll}
& C_i=0,\ i=1,\ldots,r_1,\\
& C_j\ge0,\ j=r_1+1,\ldots,r_2,\\
& C_k\ge0,\ k=r_2+1,\ldots,r_3.
\end{array}
\]

Step 3. Apply Algorithm 1 to reduce  $\{C_l,l\in\mathcal{N}_{r_3}\backslash\mathcal{N}_{r_1}\}$ by 
$\{C_l=0,l\in\mathcal{N}_{r_1}\}$ to obtain the Jordan normal form of \\
\indent\indent $\{C_l,l\in\mathcal{N}_{r_1}\}$,
denoted by $B$, and 
the remainder set, denoted by $\mathbf{C}_1=\{g_i,i\in\mathcal{N}_r\}$.

Step 4. Apply Algorithm \ref{reducemcs-of-S_f} to obtain the reduced minimal characterization set of $\mathcal{R}(\mathbf{C}_1)$, denoted by \\
\indent\indent  $S_M=\widetilde{E}\cup S_{r'}$. 

Step 5. Let $G=\widetilde{\mathcal{R}}^{-1}(\widetilde{E})\cup B$ and compute the Jordan normal form of $G$, denoted by $\mathcal{B}=\{\mathcal{C}_{i},i\in\mathcal{N}_{t_1}\}$.


\noindent
// \ The original problem ${\rm P}_1$ has been transformed into \\
\noindent
// \ \textbf{Problem $\mathbf{P_2}$}: Determine whether ${F} = 0$ is implied by  $\widetilde{R}(\mathcal{B})$.

Step 6. Reduce $F$ by $\widetilde{R}(\mathcal{B})$ to obtain remainder $F_1$.
If $F_1 \equiv 0$, then the objective identity $\bar{F} = 0$ is proved. \\
\indent\indent Otherwise, declare `Not Provable'.

\noindent 
// \ As explained in Procedure~I, $F_1$ involves only the free variables in the Jordan normal form $\mathcal{B}$. Therefore, \\
// \ if $F_1 \not\equiv 0$, the free variables 
can be chosen such that $F_1$ is evaluated to a nonzero value.

\begin{remark}
An information identity $F = 0$ is equivalent to the two information inequalities
$F \ge 0$ and $F \le 0$. In the previous approach, in order to prove $F = 0$,
$F \ge 0$ and $F \le 0$ are proved separately by solving two LPs.
In Procedure II, we transform the proof into a Gauss elimination problem, which greatly reduces the computational complexity.
\end{remark}

\begin{remark}
Procedures I and II can be implemented on the computer by Maple
for symbolic computation. 
Therefore, they can give explicit proofs of information inequalities and identities.
\end{remark}







\section{Illustrative examples}

In this section, we give two examples to illustrate Procedures I and II. The computation is performed by Maple.

\subsection{Information Inequality under Equality Constraints}

\begin{example}\label{example-1}
$I(X_i;X_4)=0,\ i=1,2,3$ and $H(X_4|X_i,X_j)=0, 1\leq i<j\leq 3$ $\Rightarrow$ $H(X_i)\geq H(X_4)$.
\end{example}

\begin{proof}
By symmetry of the problem, we only need to prove $H(X_1)\geq H(X_4)$. The proof is given according to Procedure~I.

\noindent
\textbf{Input:} \\
Objective information inequality: $\bar{F}=H(X_1)- H(X_4)\geq0$.\\
Equality Constraints: $\bar{C}_{1}=I(X_1;X_4)=0$, $\bar{C}_{2}=I(X_2;X_4)=0$, $\bar{C}_{3}=I(X_3;X_4)=0$,\\
\indent\indent $\bar{C}_{4}=H(X_4|X_1,X_2)=0$, $\bar{C}_5=H(X_4|X_1,X_3)=0$, $\bar{C}_6=H(X_4|X_2,X_3)=0$.\\
$28$ element information inequalities: $\bar{C}_{k}\ge0,\ k\in\mathcal{N}_{34}\backslash\mathcal{N}_{6}$.

Step 1. The $s$-variables set contains $15$ elements.
The $s$-variable sequence $\mathcal{S}_4=[s_{1, 2, 3, 4}, s_{1, 1, 3, 4}, s_{1, 2, 1, 4},\\
  s_{1, 2, 3, 1}, s_{2, 2, 3, 4}, s_{1, 1, 1, 4}, s_{1, 1, 3, 1}, s_{1, 2, 1, 1}, s_{2, 2, 2, 4}, s_{2, 2, 3, 2},
 s_{3, 3, 3, 4}, s_{1, 1, 1, 1}, s_{2, 2, 2, 2}, s_{3, 3, 3, 3}, s_{4, 4, 4, 4}]$.

Step 2. We have
$F=s_{1, 1, 1, 1}+s_{1, 1, 3, 1}+s_{1, 2, 1, 1}+s_{1, 2, 3, 1}-s_{2, 2, 2, 4}-s_{2, 2, 3, 4}-s_{3, 3, 3, 4}-s_{4, 4, 4, 4}$,\\
$C_1=s_{1, 1, 1, 4}+s_{1, 1, 3, 4}+s_{1, 2, 1, 4}+s_{1, 2, 3, 4}$, 
$C_2=s_{1, 2, 1, 4}+s_{1, 2, 3, 4}+s_{2, 2, 2, 4}+s_{2, 2, 3, 4}$, 
$C_3=s_{1, 1, 3, 4}+s_{1, 2, 3, 4}+s_{2, 2, 3, 4}+s_{3, 3, 3, 4}$,\\
$C_4=s_{3, 3, 3, 4}+s_{4, 4, 4, 4}$,
$C_5=s_{2, 2, 2, 4}+s_{4, 4, 4, 4}$, 
$C_6=s_{1, 1, 1, 4}+s_{4, 4, 4, 4}$,
and $28$ linear polynomials $C_{k},k\in\mathcal{N}_{34}\backslash\mathcal{N}_6$ are obtained from the $28$ element information inequalities.


Step 3. Compute the Gauss-Jordan normal form of $\{ C_i, i \in {\cal N}_6 \}$, $B=\{s_{3, 3, 3, 4}+s_{4, 4, 4, 4}, \, s_{2, 2, 2, 4}+s_{4, 4, 4, 4}, \, s_{1, 1, 1, 4}+s_{4, 4, 4, 4}, \, s_{1, 2, 1, 4}-s_{2, 2, 3, 4}, \, s_{1, 1, 3, 4}-s_{2, 2, 3, 4}, \, s_{1, 2, 3, 4}+2s_{2, 2, 3, 4}-s_{4, 4, 4, 4}\}$. Use 
Algorithm~\ref{division-algorithm} to reduce $\{C_l,l\in\mathcal{N}_{34}\backslash\mathcal{N}_6\}$ by $\widetilde{R}(B)$ to obtain the remainder set
$\textbf{C}_1=\{g_i,i\in\mathcal{N}_{18}\}$.

%
%




Step 4. Use Algorithm \ref{reducemcs-of-S_f} to obtain $S_M=\widetilde{E}\cup S_{r'}$ 
and $S_{r'}=\{\mathbb{C}_i, i=1,\ldots,10\}$, where 
$$\begin{array}{l}
\mathbb{C}_1=s_{1, 1, 1, 1}, \, \mathbb{C}_2=s_{1, 1, 3, 1}, \, \mathbb{C}_3=s_{1, 2, 1, 1}, \, \mathbb{C}_4=s_{2, 2, 2, 2}, \, \mathbb{C}_5=s_{2, 2, 3, 2}, \, \mathbb{C}_6=s_{2, 2, 3, 4}, \, \mathbb{C}_7=s_{3, 3, 3, 3},\\
\mathbb{C}_{8}=s_{1, 2, 3, 1}-s_{2, 2, 3, 4}+s_{1, 1, 3, 1}, \, \mathbb{C}_{9}=s_{1, 2, 3, 1}-s_{2, 2, 3, 4}+s_{1, 2, 1, 1},\, \mathbb{C}_{10}=s_{1, 2, 3, 1}-s_{2, 2, 3, 4}+s_{2, 2, 3, 2}.
\end{array}$$

Step 5. Compute the Gauss-Jordan normal form $\mathcal{B}=\{s_{4, 4, 4, 4}, \, s_{3, 3, 3, 4}, \, s_{2, 2, 2, 4}, \, s_{1, 1, 1, 4}, \, s_{1, 2, 1, 4}-s_{2, 2, 3, 4},$ \\
$s_{1, 1, 3, 4}-s_{2, 2, 3, 4}, \, s_{1, 2, 3, 4}+2s_{2, 2, 3, 4}\}$.

Step 6. Reduce $F$ by $\widetilde{R}(\mathcal{B})$ to obatain $F_1=s_{1, 1, 1, 1}+s_{1, 2, 1, 1}-s_{2, 2, 3, 4}+s_{1, 1, 3, 1}+s_{1, 2, 3, 1}$.


%
%

Steps 7-11. We have $t_2=10$, $n_1=8$, $\bar{S}_{P}=\{p_{9}+p_{10}, 1-p_{9},  -p_{10}, 1-p_{9}-p_{10}, p_{9}, p_{10}\}$ and $S_P=\bar{S}_{P}\cup\{1,0\}$.

Step 12. Solve the LP in Problem ${\rm P}_3$ to complete the proof. Alternatively, we can solve the inequality set $\mathcal{R}(\bar{S}_P)$ to obtain the solution $\{0 \leq p_{9} \leq 1,\ p_{10} = 0\}$. Substituting $p_9=0$ and $p_{10}=0$ to $\{p_i=P_i,i\in\mathcal{N}_{10}\}$ yields $\{p_{1} = 1, p_{2} = 0, p_{3} = 1, p_{4} = 0, p_{5} = 0, p_{6} = 0, p_{7} = 0, p_{8} = 1, p_{9} = 0, p_{10} = 0\}$. 
Thus an explicit proof is given by $F_1=\mathbb{C}_1+\mathbb{C}_3+\mathbb{C}_{8}\geq0$.
\end{proof}

\begin{remark}

Table \ref{table-compare} shows the advantage of Procedure~I for Example \ref{example-1} by comparing it with the Direct LP method induced by Theorem~\ref{LP-S}.
\begin{table}[h!]
\caption{}
\label{table-compare}
\begin{center}
\begin{tabular}{ |c|c|c|c| } 
 \hline
  & Number of variables & Number of equality constraints & Number of Inequality constraints \\ 
 \hline
Direct LP method & 15 & 6 & 28 \\ 
 \hline
LP in Problem ${\rm P}_3$ & 2 & 0 & 6 \\ 
 \hline
\end{tabular}
\end{center}
\end{table}

\end{remark}

\subsection{Information Identity under Equality Constraints}

\begin{example}
$I(X_1;X_2|X_3)=0$, $H(X_3)=I(X_2;X_3|X_1)$ $\Rightarrow$ $H(X_1)=H(X_1|X_2,X_3)$.
\end{example}

\begin{proof}
The proof is given according to Procedure II.

\noindent
\textbf{Input:} \\
Objective information inequality: $\bar{F}=H(X_1)-H(X_1|X_2,X_3)\geq0$.\\
Equality Constraints: $\bar{C}_{1}=I(X_1;X_2|X_3)=0$, $\bar{C}_{2}=H(X_3)-I(X_2;X_3|X_1)=0$.\\
$9$ element information inequalities: $\bar{C}_{k}\ge0,\ k\in\mathcal{N}_{11}\backslash\mathcal{N}_{2}$.

Step 1. The $s$-variables set contains $7$ elements. The $s$-variable sequence
$\mathcal{S}_3= $ 
$[s_{1,2,3}, s_{1,1,3}, s_{1,2,1}, s_{2,2,3}, s_{1,1,1}, s_{2,2,2}, s_{3,3,3}]$.

Step 2. We have $F=s_{1,1,3}+s_{1,2,1}+s_{1,2,3}$,
$C_1=s_{1,2,1}$, $C_2=s_{1,1,3}+s_{1,2,3}+s_{3,3,3}$,
$C_3=s_{1,1,1}$, $C_4=s_{2,2,2}$, $C_5=s_{3,3,3}$, $C_6=s_{1,2,1}+s_{1,2,3}$, $C_7=s_{1,2,3}+s_{2,2,3}$, $C_8=s_{1,1,3}+s_{1,2,3}$, $C_9=s_{1,2,1}$, $C_{10}=s_{1,1,3}$ and $C_{11}=s_{2,2,3}$.






Step 3. Compute the Gauss-Jordan normal form $B=\{s_{1,2,1},s_{1,1,3}+s_{1,2,3}+s_{3,3,3}\}$.
Use Algorithm~\ref{division-algorithm} to reduce $\{C_l,l\in\mathcal{N}_{11}\backslash\mathcal{N}_2\}$ by $\widetilde{R}(B)$ to obtain the remainder set $\mathbf{C}_1=\{g_i,i\in\mathcal{N}_{8}\}$, where $g_{1} = s_{1, 1, 1}, \, g_{2} = s_{2, 2, 2}, \, g_{3} = s_{3, 3, 3}, \, g_{4} = -s_{1, 1, 3}-s_{3, 3, 3}, \, g_{5} = -s_{1, 1, 3}-s_{3, 3, 3}+s_{2, 2, 3}, \, g_{6} = -s_{3, 3, 3}, \, g_{7} = s_{1, 1, 3}, \, g_{8} = s_{2, 2, 3}$.

Step 4. Use Algorithm \ref{reducemcs-of-S_f} to obtain $S_M=\widetilde{E}\cup S_{r'}$, where $\widetilde{E}=\{s_{1,1,3}=0,\ s_{3,3,3}=0\}$.

Step 5. Compute the Gauss-Jordon normal form $\mathcal{B}=\{s_{1, 2, 3}, s_{1, 1, 3}, s_{1, 2, 1}, s_{3, 3, 3}\}$.

Step 6. Reduce $F$ by $\mathcal{B}$ to obtain $F_1 \equiv 0$. Thus the information identity is proved.

\end{proof}

\section{Conclusion and discussion}
\label{sec-conc}
In this paper, we develop a new method to prove linear information inequalities and identities. Instead of solving an LP, we transform the problem into a polynomial reduction problem. For the proof of information inequalities, compared with existing  methods (ITIP and its variations), our method takes advantage of the algebraic structure of the problem and greatly reduces the computational complexity. For the proof of information identities, we give a simple direct proof method which is much more efficient than existing methods.

\appendices

\section*{Acknowledgment}
This work is partially supported by NSFC 11688101, and Fundamental Research Funds for the Central Universities (2021NTST32).

%
\end{document}